\documentclass[a4paper,UKenglish,cleveref, autoref, thm-restate]{lipics-v2021}

\bibliographystyle{plainurl}%
\title{History-deterministic Vector Addition Systems}

\author{Sougata Bose}{University of Liverpool, UK}{sougata.bose@liverpool.ac.uk}{https://orcid.org/0000-0003-3662-3915}{}
\author{David Purser}{University of Liverpool, UK}{D.Purser@liverpool.ac.uk}{https://orcid.org/0000-0003-0394-1634}{}
\author{Patrick Totzke}{University of Liverpool, UK}{totzke@liverpool.ac.uk}{https://orcid.org/0000-0001-5274-8190}{}

\authorrunning{S.~Bose, D.~Purser and P.~Totzke} 

\Copyright{Sougata Bose, David Purser and Patrick Totzke}

\ccsdesc[500]{Theory of computation~Formal languages and automata theory}

\keywords{Vector Addition Systems, History-determinism, Good-for Games}

\category{} %

\relatedversion{This is the full version of a paper published in CONCUR 2023.}

\supplement{}%

\funding{
    We gratefully acknowledge support from the EPSRC, project number \texttt{EP/V025848/1}
    and the Royal Society, \texttt{IES\textbackslash R3\textbackslash 203109}.
}

\acknowledgements{The authors thank Ismaël Jecker, Piotr Hofman and Filip Mazowiecki for fruitful discussions.}%

\nolinenumbers %
\hideLIPIcs

\EventEditors{Guillermo A. P\'{e}rez and Jean-Fran\c{c}ois Raskin}
\EventNoEds{2}
\EventLongTitle{34th International Conference on Concurrency Theory (CONCUR 2023)}
\EventShortTitle{CONCUR 2023}
\EventAcronym{CONCUR}
\EventYear{2023}
\EventDate{September 18--23, 2023}
\EventLocation{Antwerp, Belgium}
\EventLogo{}
\SeriesVolume{279}
\ArticleNo{18}

\usepackage[basic]{complexity}
\usepackage{tabularx}

\usepackage{pgf}

\usepackage{stmaryrd}
\usepackage{adjustbox}
\usepackage{tikz}
\usetikzlibrary{arrows,automata}

\usepackage{xspace}

\usepackage{sansmath}

\usepackage{pgf}
\usepackage{tikz}

\newcommand{\+}[1]{\mathbb{#1}}
\newcommand{\?}[1]{\mathcal{#1}}

\usepackage{bm}
\renewcommand{\vec}[1]{\bm{#1}}
\newcommand{\support}[1]{\mathit{support}(#1)}

\newcommand{\x}{\times}
\newcommand{\N}{\+{N}}

\newcommand{\Z}{\mathbb{Z}}

\renewcommand{\epsilon}{\varepsilon}
\newcommand{\eps}{\varepsilon}

\newcommand{\hdness}{\text{HDness}\xspace}

\newcommand{\SYS}[4][]{\ensuremath{\if\relax\detokenize{#1}\relax\else#1\text{-}\fi\text{#2}\text{-VASS}^{#3}_{#4}}\xspace}
\newcommand{\Advass}[1][]{\SYS[#1]{D}{}{}}
\newcommand{\ARdvass}[1][]{\SYS[#1]{D}{0}{}}
\newcommand{\Avass}[1][]{\ensuremath{\if\relax\detokenize{#1}\relax\else#1\text{-}\fi\text{VASS}}\xspace}
\newcommand{\Ahvass}[1][]{\SYS[#1]{H}{}{}}
\newcommand{\Anvass}[1][]{\SYS[#1]{N}{}{}}
\newcommand{\ARhvass}[1][]{\SYS[#1]{H}{0}{}}

\newcommand{\AhvassE}[1][]{\SYS[#1]{H}{}{\eps}}

 \newcommand{\CLASS}[4][]{\ensuremath{\if\relax\detokenize{#1}\relax\else#1\text{-}\fi\mathcal{#2}^{#3}_{#4}}\xspace}
 \newcommand{\CDvass}[1][]{\CLASS[#1]{D}{}{}}
 \newcommand{\CHvass}[1][]{\CLASS[#1]{H}{}{}}
 \newcommand{\CNvass}[1][]{\CLASS[#1]{N}{}{}}
 \newcommand{\CHvassE}[1][]{\CLASS[#1]{H}{}{\eps}}
 \newcommand{\CNvassE}[1][]{\CLASS[#1]{N}{}{\eps}}
 \newcommand{\RDvass}[1][]{\CLASS[#1]{D}{0}{}}
 \newcommand{\RHvass}[1][]{\CLASS[#1]{H}{0}{}}
 \newcommand{\RNvass}[1][]{\CLASS[#1]{N}{0}{}}
 \newcommand{\RHvassE}[1][]{\CLASS[#1]{H}{0}{\eps}}
 \newcommand{\RNvassE}[1][]{\CLASS[#1]{N}{0}{\eps}}

\newcommand{\vass}{\Avass}

\newcommand{\dvass}{\Advass}
\newcommand{\hdvass}{\Ahvass}

\newcommand{\hdvasse}{\AhvassE}

\newcommand{\LnotDVASS}{\ensuremath{L_1}}
\newcommand{\LnotHDVASS}{\ensuremath{L_3}}
\newcommand{\LnotFSVASS}{\ensuremath{L_8}}

\newcommand{\Lanbn}{\ensuremath{L_5}}
\newcommand{\Lanblen}{\ensuremath{L_4}}
\newcommand{\Lanblenbarrier}{\ensuremath{L_7}}

\newcommand{\Lanbgen}{\ensuremath{L_2}}
\newcommand{\LmustVASSe}{\ensuremath{L_6}}

\newcommand{\Lnotrunion}{\ensuremath{L_{9}}}
\newcommand{\Lanbnastar}{\ensuremath{L_{13}}}
\newcommand{\Lnotcunion}{\ensuremath{L_{10}}}
\newcommand{\Lnotcint}{\ensuremath{L_{11}}}
\newcommand{\Lnotrint}{\ensuremath{L_{12}}}
\newcommand{\Lcross}{\ensuremath{L_{14}}}

\newcommand{\sys}[1]{\?{#1}}
\newcommand{\states}[1][]{Q_{#1}}
\newcommand{\fstates}[1][]{F_{#1}}
\newcommand{\alphabet}[1][]{\Sigma_{#1}}
\newcommand{\transitions}[1][]{\delta_{#1}}

\newcommand{\initialstate}{s_{0}}
\newcommand{\tlabel}{\mathit{label}}
\newcommand{\tfunction}[1]{\tlabel(#1)}
\newcommand{\effect}[1]{\mathit{effect}(#1)}

\newcommand{\vasstuple}{(\alphabet,\states,\transitions,\initialstate,\fstates)}
\newcommand{\step}[2][]{\xrightarrow[#1]{#2}}
\newcommand{\Post}[2][]{\mathit{Post}_{#1}({#2})}

\newcommand{\abs}[1]{\lvert#1\rvert}

\newcommand{\norm}[2][]{\lVert#2\rVert_{#1}}

\newcommand{\eqby}[2][=]{\stackrel{\mathit{{\scriptscriptstyle{#2}}}}{#1}}
\newcommand{\eqdef}{\eqby{def}}

\newcommand{\Lang}[2][]{\?L_{#1}(#2)}

\newcommand{\dec}{\text{dec}}
\newcommand{\inc}{\text{inc}}
\newcommand{\decop}[1]{X_{#1}{-}{-}}  
\newcommand{\incop}[1]{X_{#1}{+}{+}}
\newcommand{\ztest}{\text{ztest}}
\newcommand{\op}{\gamma}
\newcommand{\CMOPS}{\Gamma}

\begin{document}

\maketitle

\begin{abstract}
We consider history-determinism, a restricted form of non-determinism, for Vector Addition Systems with States (VASS) when used as acceptors to recognise languages of finite words.
History-determinism requires that the non-deterministic choices can be resolved on-the-fly; based on the past and
without jeopardising acceptance of any possible continuation of the input word.

Our results show that the history-deterministic (HD) VASS sit strictly between deterministic and non-deterministic VASS regardless of the number of counters. We compare the relative expressiveness of HD systems, and closure-properties of the induced language classes, with coverability and reachability semantics, and with and without $\epsilon$-labelled transitions.

Whereas in dimension 1, inclusion and regularity remain decidable, from dimension two onwards, HD-VASS with suitable resolver strategies, are essentially able to simulate 2-counter Minsky machines, leading to several undecidability results: It is undecidable whether a VASS is history-deterministic, or if a language equivalent history-deterministic VASS exists. Checking language inclusion between history-deterministic $2$-VASS is also undecidable.

 \end{abstract}

\section{Introduction}

Vector addition systems with states (VASSs)
are an established model of concurrency with extensive applications in modelling and analysis of hardware, software, chemical, biological and business processes.
They are non-deterministic finite automata equipped with a fixed number of integer counters
that may be incremented or decremented when changing control state, as long as they remain non-negative.

We explore the notion of \emph{history-determinism} %
for VASSs %
when used as acceptors to define languages of finite words.
History-determinism is a restricted form of non-determinism.
In a nutshell, a non-deterministic automaton is history-deterministic (HD)
if there exists a \emph{resolver}, which is a strategy to %
stepwise 
produce a run for any input word given one letter at a time,
in such a way that
if there exist some accepting run on the given word then the run produced by the resolver is also accepting.

The original motivation for HDness comes from formal verification:
most modelling formalisms incorporate some form of non-determinism,
e.g.,
to over-approximate deterministic algorithms,
to state specifications concisely,
or to model system behaviour due to uncontrollable external environments.
However, for non-deterministic models, many formal analysis techniques require costly determinisation steps that are often the main barrier to efficient procedures.
History-deterministic automata provide a middle ground:
they are typically more succinct, or even more expressive, than their deterministic counterparts
while preserving some of their good algorithmic properties.
They were also called
``good-for-games'' as they preserve the winner of games under composition and thus allow
solving games without determinisation.

Any resolver must always choose language-maximal successors, that is, the language of the chosen successor must also include the language of any alternative successor.
When considering languages of finite words, 
being able to continue making language-maximal choices is even 
a sufficient 
condition for being a resolver.
Therefore, in this case,
resolvers can be assumed to be (configuration) positional: they base their decisions only on the current configuration and not the full history leading to it.
Perhaps surprisingly, resolvers for VASSs are not necessarily monotone with respect to counter values,
and may require more than just comparing counters to integer thresholds (see \cref{app:lang-max}).

\vspace*{-0.3cm}
\subparagraph*{Related Work.}
VASSs, also known as Petri nets or partially blind counter automata,
have been studied intensively since their inception in the 1960s.
Early works focussed on modelling capabilities, relative expressiveness and closure properties of their recognised languages \cite{H1976,G1978,VVN1981,J1986}
but the bulk of research on VASSs concerns decidability and complexity of decision problems \cite{KM1969,L1976,R1978,K1982,JEM1999,JM1995,BGJ2014,L2021,CO2021}.
In order to define languages with VASSs, different definitions distinguish between coverability and reachability acceptance conditions, and whether or not silent ($\epsilon$) transitions %
are permitted.
Checking language emptiness amounts to testing coverability or reachability, which are \EXPSPACE~\cite{R1978,L1976} and Ackermann-complete~\cite{CO2021} respectively. Many other decision problems are undecidable, such as checking language inclusion, %
bisimulation and related equivalences
\cite{J2001} as well as checking (language) regularity \cite{JM1995}. 
Universality is undecidable for reachability acceptance \cite{VVN1981}
and decidable for coverability acceptance,
via a well-quasi-order argument but with extremely high
complexity (Hyper-Ackermannian in general \cite{JEM1999} and still Ackermannian in dimension $1$ \cite{HT2017}).
These negative results by and large rely on the presence of non-deterministic choice,
which motivates restricted forms of non-determinism
such as bounded ambiguity (that allows for decidable inclusion \cite{CH2022})
or the notion of history-determinism studied here.

VASS recognisable languages over infinite words are significantly more complex than their finite-word cousins,
both topologically and in terms of decision problems: %
already $1$-VASS with (cover) B\"uchi acceptance can recognise $\Sigma^1_1$-complete languages \cite{S2017,FS2014}
and have an undecidable universality problem \cite{BGHH2017}. Again, the added complexity is due to non-determinism (languages of deterministic models are Borel, lower in the analytical hierarchy).

\medskip
History-determinism was introduced independently, with slightly
different definitions, by Henzinger and Piterman~\cite{HP06} for solving games without determinisation, by Colcombet~\cite{Col09} for cost-functions, and by Kupferman, Safra, and Vardi~\cite{KSV06} for recognising derived tree languages of word automata.
These different definitions all coincide for finite automata \cite{BL19}
but not necessarily for more general quantitative automata \cite{BL21}.

Until now, history-determinism has mainly been studied for finite-state systems.
In this paper we continue a recent line of work 
\cite{GJLZ21,LZ22,EGJLZ22,HLT2022,BHLST2022,PT2022}
that studies the notion for infinite-state models
capable of recognising languages beyond ($\omega$-)regular ones.
For infinite-state systems, deterministic models are in general less expressive, not just less succinct, than their non-deterministic counterparts. 
In some cases they can be determinised, such is the case for quantitative automata~\cite{BL21} and timed automata
with safety and reachability acceptance \cite{HLT2022}.
In contrast, for pushdown automata~\cite{GJLZ21} and Parikh automata (VASS with $\Z$-valued counters; \cite{EGJLZ22}), 
and timed automata with co-B\"uchi acceptance, allowing history-determinism strictly increases expressiveness (and adds more closure properties) compared to the deterministic variant.
Whenever HD automata are strictly less expressive than fully non-deterministic ones, %
one can reasonably ask if 
there exists an equivalent HD automaton for a given
non-deterministic one.
This language HDness question is undecidable for pushdown and Parikh-automata
\cite{GJLZ21,EGJLZ22}.
In fact, even checking if a given (pushdown or Parikh) automaton is itself HD is undecidable
(for Parikh automata this follows for example by the undecidability of $2$-dim.~robot games \cite{NPR2016}). On the other hand, checking HDness 
for timed automata is decidable \cite{HLT2022} and various models of quantitative automata~\cite{BL22}.

Most closely related to our work is that of
Prakash and Thejaswini \cite{PT2022} who study history-deterministic one counter automata (OCA; PDA with unary stack alphabet) and nets (OCN; $1$-dimensional VASSs) with state-based (coverability) acceptance.
They show that checking automata HDness and inclusion are undecidable for OCA but remain decidable for OCNs.
A useful consequence of their construction is that for any OCN one can construct a language equivalent deterministic OCA (with zero-test),
albeit with a doubly exponential blow-up.
They do not consider closure properties and leave open whether history-deterministic OCNs 
can be determinised, are equally expressive as fully non-deterministic OCNs, or fall strictly in between in expressiveness.  %
Our work extends and generalises this paper in several directions.

\vspace*{-0.3cm}
\subparagraph*{Our Contributions.}
We study history-deterministic VASSs on finite words and without restricting the dimension.
We consider coverability and reachability acceptance conditions, with and without silent ($\eps$) transitions,
and in all cases study the relative expressiveness, closure properties, and related decision problems.

We show that HD VASSs are more expressive than deterministic, but less expressive than non-deterministic ones.
The same is true for languages recognised by VASSs of any fixed dimensions $k$,
which answers the open question in \cite{PT2022} for $k=1$.
In particular, we provide examples of $1$-dim.~HD VASSs for which no equivalent deterministic ones exist in any dimension $k$,
and also demonstrate that HD VASSs are strictly more expressive than finitely sequential ones (another restricted form of non-determinism).

We show that HD VASS languages are closed under inverse homomorphisms and intersections for both coverability and reachability semantics,
although sometimes necessarily increasing the dimension.
Coverability languages are closed under unions, whereas reachability languages are not. Neither are closed under other standard operations,
including complementation, concatenation, homomorphisms, iteration and commutative closures.

We report that HDness is not sufficient for decidability of inclusion checking, even for $2$-dimensional VASSs.
A direct consequence is the undecidability of checking HDness of a given $2$-VASS, contrasting decidability in dimension $1$.
Further, it is undecidable to check if a given VASS has a HD equivalent,
and also if a given HD VASS recognises a regular language.

\section{Definitions}
\label{sec:definitions}

\vspace*{-0.3cm}
\subparagraph*{Vector-Addition Systems and their recognised languages.}

    A $k$-dimensional \emph{vector-addition system} ($k$-VASS)
    is a non-deterministic finite automaton whose transitions
    manipulate $k$ non-negative integer counters.
    It is given by $\sys{A}=\vasstuple$ consisting of 
         a finite alphabet $\alphabet$; 
         a finite set of control states $\states$; 
         a transition relation $\transitions\subseteq \states\x\alphabet\cup\{\eps\}\x\Z^k\x\states$; 
         an initial state $\initialstate$; 
         a subset $\fstates\subseteq\states$ of final states.

    For a transition $t=(s,a,e,s')\in\transitions$ we sometimes write 
    $\tfunction{t}\eqdef a$ for the letter from $\alphabet\cup\{\eps\}$ it reads
    and
    $\effect{t}\eqdef e$ for its \emph{effect} on the counters.
    $\norm{\delta}$ denotes the largest absolute effect among all transitions on any counter.

    A VASS naturally induces an infinite-state labelled transition system in which each
    \emph{configuration} is a pair $(s,v)\in \states\x\N^k$ comprising a control state and a \emph{non-negative} integer vector.
    Every transition $t=(s,a,e,s')\in\transitions$ gives rise to
    steps $(s,v)\step{t}(s',v')$ for all $v,v'\in\N$ with $v'=v+e$.
    We will call a path $\rho=(s_0,v_0)\step{t_1}(s_1,v_1)\step{t_1}\ldots\step{t_k}(s_k,v_k)$
    a \emph{run} of the VASS and say it is a \emph{cycle} if $s_0=s_{k}$.
    Its \emph{effect} is the sum of all transition effects $\effect{\rho}\eqdef \sum_{i=1}^{k} \effect{t_i}$.
    A run $\rho$ as above %
    \emph{reads} the word $\tfunction{\rho}=\tfunction{t_1}\tfunction{t_2}\ldots\tfunction{t_k}\in\Sigma^*$.
    It is \emph{accepting} if it ends in a final configuration.

    We consider two different definitions for what constitutes a final (also \emph{accepting}) configurations:
    In the \emph{coverability} semantics, the set of final configurations is $F\x\N$.
    In the \emph{reachability} semantics, only configurations from $F\x\vec{0}$ are final.
We define the \emph{language}
$\Lang[\?A]{s,v}\subseteq\alphabet^*$
of a configuration $(s,v)$
to contain exactly all words read by some accepting run starting in $(s,v)$ (we omit the subscript $\?A$ if the VASS is clear from context).
For notational convenience, we will lift this to sets $S\subseteq\states\x\N^k$ of configurations in the natural way:
$\Lang[\?A]{S}\eqdef\bigcup_{(s,v)\in S}\Lang[\?A]{s,v}$ and define
the language of $\sys{A}$ as that of its initial state with all counters zero: $\Lang{\?A}\eqdef\Lang[\?A]{\initialstate,\vec{0}}$.

We will sometimes denote languages using short-hand ``counting expressions''. For instance, we write
$a^nb^{\le n}$ for the language
$\{a^nb^m\mid n\ge m\}$ over $\Sigma=\{a,b\}$.

\vspace*{-0.3cm}
\subparagraph*{Deterministic and finitely-sequential VASSs.}
A VASS 
$\?A=\vasstuple$ is
called $\eps$-free if 
no transition is labelled by $\eps$.
It is \emph{deterministic} if it is $\eps$-free and for every pair $(s,a) \in \states\x\alphabet$ there
is at most one transition $t=(s,a,e,s')\in\transitions$.
A VASS is \emph{finitely sequential} if it is the finite union of deterministic VASSs.
That is, all transitions from its initial state $\initialstate$ are labelled by $\eps$
and lead to an initial state of one of finitely many deterministic VASSs.

\vspace*{-0.3cm}
\subparagraph*{History-deterministic VASSs.}
A VASS is \emph{history-deterministic} if one can resolve non-deterministic choices on-the-fly.
More formally, consider a function
$r: (\states\x\N^k\x\transitions)^*(\states\x\N)\x\Sigma\to\transitions$ that, given a finite run 
$\rho_i = (s_0,v_0)\step{t_1}(s_1,v_1)\step{t_2}\dots
\step{t_i}(s_i,v_i)
$
and a next letter $a_{i+1}\in \Sigma$, returns a transition 
$r(\rho_i,a_{i+1})=t_{i+1} = (s_i,a_{i+1},e_{i+1},s_{i+1})\in\transitions$
with $v_i+e_{i+1}\in\N^k$.
This yields, for every word $w=a_1a_2\ldots \in \Sigma^*$
and initial configuration $(s_0,v_0)$, a unique run
in which the $i$th step
$(s_{i-1},v_{i-1})\step{t_i}(s_{i},v_{i})$
results from a transition chosen by $r$.
Such a function is called \emph{resolver} if for any input word
$w\in \Lang[\?A]{s_0,v_0}$
the constructed run $\rho$ from initial configuration
$(s_0,v_0)$ is accepting.
A $k$-VASS is \emph{history-deterministic} if such a resolver exists.

\vspace*{-0.3cm}
\subparagraph*{Language Classes.}
We denote by 
$\CDvass[k]$,
$\CHvass[k]$,and 
$\CNvass[k]$
the classes of languages recognised by $k$-dimensional $\eps$-free deterministic, history-deterministic, and fully non-deterministic VASSs, in the coverability semantics.
Similarly, let
$\RDvass[k]$,
$\RHvass[k]$, and 
$\RNvass[k]$
denote the classes of languages recognised by $k$-dimensional $\eps$-free deterministic, history-deterministic, and fully non-deterministic VASSs, in the reachability semantics.
Finally, define
$\CHvassE[k]$,$\CNvassE[k]$ $\RHvassE[k]$, and
$\RNvassE[k]$, as above but without the restriction to $\eps$-free systems.
When dropping the parameter $k$ we refer to the union over all dimensions $k$. For instance,
$\CHvass \eqdef\bigcup_{k\in\N}\CHvass[k]$.

\section{Expressiveness}
\label{sec:expressiveness}

We consider the hierarchy of language classes recognised by vector addition systems, varying definitions in three directions: the degree of non-determinism, reachability vs coverability acceptance, and with/without $\epsilon$-transitions. 

The situation is depicted in~\cref{fig:comparison}. 
We start by looking at the classes defined by $\eps$-free systems (in \cref{sub:exprnoneps})
before discussing the effect of $\epsilon$-transitions (in \cref{sub:expr:eps}) and following this up with a comparison with finitely-sequential VASS (in \cref{sub:expr:fs}).

\subsection{Separating determinism, history-determinism and non-determinism}
\label{sub:exprnoneps}

\begin{figure}
\begin{adjustbox}{max width=\textwidth}
\begin{tikzpicture}[->,>=stealth',auto,
                    semithick]
  \tikzstyle{vertex}=[circle,fill=black!25,minimum size=17pt,inner sep=0pt]

  \node (LD) at (-1.5,-2.2){\CDvass[k]};

  \node (LHD) at (-1.5,-1){\CHvass[k]};

  \node (L) at ({-1.5+2.5},{1}){\CNvass[k]};

  \node (eLHD) at ({-1.5-2.2},1){\CHvassE[k]};
  \node (eL) at ({-1.5},{4}){\CNvassE[k]};

  \node (LD0) at ({7},-2.2){\RDvass[k]};

  \node (LHD0) at ({7},-1){\RHvass[k]};

  \node (eLHD0) at ({7-2.5},2){\RHvassE[k]};
  \node (L0) at ({7+2.5},{2}){\RNvass[k]};
  \node (eL0) at ({7},{4}){\RNvassE[k]};
  ;

  \path 
  (LHD) edge[dotted] node[align=left]{$k=1$ eqiv\\ $k\ge 2:  \LmustVASSe$} (eLHD)
  (LHD) edge node[right,align=left]{$\LnotHDVASS$}  (L)
  (LHD0) edge node[] {$\Lanblenbarrier$} (eLHD0)
  (LHD0) edge node[below right] {$\Lanblen$}(L0)

  (LD0) edge node[right] {$\Lanbgen$}(LHD0)
  (LD) edge node[right] {$\LnotDVASS$}(LHD)

  (eLHD) edge  node[midway] {$\LnotHDVASS$} (eL)
  (eLHD0) edge node {$\LnotHDVASS$} (eL0)
  (L) edge[dotted] node[above right,align=left]{$k=1$ eqiv\\ $k\ge 2: \LmustVASSe$}(eL)
  (L0) edge node[auto,above right] {$\LmustVASSe$} (eL0)

  (-0,-1.6) edge[-,dashed,red] node[above,align=left] {$\ni \Lanblen \not\in$} (5.5,-1.6)

  (-0,-1.6) edge[-,dashed,red] (LD)
  (-0,-1.6) edge[-,dashed,red] (LHD)
  (5.5,-1.6) edge[-,dashed,red] (LHD0)
  (5.5,-1.6) edge[-,dashed,red] (LD0)

  (eLHD) edge[-,dashed,red] node[below,align=left] {$\ni \LmustVASSe \not\in$ \quad
  $\not\ni \LnotHDVASS \in$} (L)

  (eLHD0) edge[-,dashed,red] node[above,align=left] {$\ni \LmustVASSe \not\in$ \quad
  $\not\ni \LnotHDVASS \in$} (L0)

  (eLHD) edge[-,red,dashed] node[midway, above left,rotate=8] {$\ni\Lanblen\not\in$\quad\quad} (eLHD0)

  (L) edge node[below,align=left,xshift=-1.5cm,yshift=-0.1cm] { } (L0)
  (eL) edge (eL0)

  ;

  \node[] at (2.75,-2.9) {{\color{blue!50!black} $\not\ni \ $}$\Lanbn${\color{green!50!black} $\ \in$}};

\node[blue!50!black] at(-1.5,-2.8) {Coverability Semantics};
\node[green!50!black] at(7,-2.8) {Reachability Semantics};

\path[fill=blue!50!black,opacity=0.1]
({-1.5+3.5},-3.1)
rectangle
({-1.5-3.5},4.5);

\path[fill=green!50!black,opacity=0.1]
({7+3.5},-3.1)
rectangle
({7-3.5},4.5);

\end{tikzpicture}
 \end{adjustbox}
\begin{center}
    \tiny
\begin{tabularx}{\textwidth}{ |l|X|c|c| } 
 \hline
 Language    & Definition & Alphabet &Page \\
 \hline
 $\LnotDVASS$  
    & $ a^nb^{\le n} + a^*b^*c$
    & $\{a,b,c\}$
    & \pageref{lang:LnotDVASS}\\
  $\Lanbgen$
    & $a^nb^{\ge n}\#$
    & $\{a,b,\#\}$
    & \pageref{lang:Lanbgen}\\
 $\LnotHDVASS$  
    & $(a+b)^*a^nb^{\le n}$
    & $\{a,b\}$
    & \pageref{lang:LnotHDVASS}\\
  $\Lanblen$
    & $a^nb^{\le n}$
    & $\{a,b\}$
    & \pageref{lang:Lanblen}\\
  $\Lanbn$
    & $a^nb^n$
    & $\{a,b\}$
    & \pageref{lang:Lanbn}\\
  $\LmustVASSe$
    & $\mathit{bin}(n)\#0^{\le n}\# $, where $\mathit{bin}(n)$ is $n$ in binary.
    & $\{0,1,\#\}$
    & \pageref{lang:LmustVASSe}\\
  $\Lanblenbarrier$
    & $a^nb^{\le n}\#$
    & $\{a,b,\#\}$
    & \pageref{lang:Lanblenbarrier}\\
 \hline
\end{tabularx}
\end{center}
\caption{Comparison of expressive power of $\vass$ and $\hdvass$ language classes, with and without silent transitions, in reachability and coverability semantics.
A solid arrow $\lang{A}\xrightarrow{}\lang{B}$ indicates strict inclusion $\lang{A}\subsetneq \lang{B}$, with a separating language denoted on the edge. A red/dashed line indicates pair-wise incomparability, with the separating languages denoted. Dotted arrows indicate a special case.
}
\label{fig:comparison}
\end{figure}
In terms of the classes of languages they define,
history-deterministic VASSs
are strictly more expressive than deterministic ones, and in turn strictly subsumed by fully non-deterministic ones.
The following theorem states this formally. Its proof is split into \cref{l1notdvass,lemma:lastblocknotHD,lemma:rhvsrd,lemma:rhvsrn}.
\begin{theorem}
\label{thm:strictseparation}
	For all $k\geq 1$, we have $\CDvass[k] \subsetneq \CHvass[k] \subsetneq \CNvass[k]$ and $\RDvass[k] \subsetneq \RHvass[k] \subsetneq \RNvass[k]$.
\end{theorem}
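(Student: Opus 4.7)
The inclusions $\CDvass[k]\subseteq\CHvass[k]\subseteq\CNvass[k]$ and their reachability analogues are immediate from the definitions: a deterministic VASS is HD via the constant resolver, and every HD VASS is by definition non-deterministic. The theorem's content is the four strict inclusions. My plan is to prove each by exhibiting one witness language in the larger class and separating it from the smaller class \emph{for every dimension}; monotonicity $\CDvass[1]\subseteq\CDvass[k]$, $\CHvass[1]\subseteq\CHvass[k]$, etc., then yields the strict inclusions for all $k\ge 1$ in one stroke.

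For $\CDvass[k]\subsetneq\CHvass[k]$ (\cref{l1notdvass}), I would design a language $L$ accepted by an HD $1$-VASS and argue $L\notin\CDvass$ in any dimension. Membership in $\CHvass[1]$ is shown constructively by specifying a non-deterministic $1$-VASS together with a positional resolver, appealing to the fact noted in the introduction that on finite words it suffices to make language-maximal choices on-the-fly. The lower bound is an indistinguishability/pumping argument: in any deterministic $k$-VASS of fixed dimension, two input prefixes leading to configurations with the same state and coordinate-wise dominating counters have closely related residual languages (by standard VASS monotonicity), so a language engineered to require unboundedly many genuinely distinct residual languages after a single fixed alphabet prefix cannot be deterministic in any dimension.

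For $\CHvass[k]\subsetneq\CNvass[k]$ (\cref{lemma:lastblocknotHD}), I would use a ``last-block'' language $L\in\CNvass[1]$ whose acceptance is determined by a count in the \emph{final} block of the input, forcing any acceptor to guess where this last block begins. Membership in $\CNvass[1]$ is witnessed by the standard guess-and-verify construction. To rule out an HD VASS in any dimension, I would play an adversary against any candidate resolver $r$: the adversary extends the input with one more block beyond whatever counter profile $r$ has committed to, and by a Dickson-type argument on the (well-quasi-ordered) set of reachable configurations the adversary can always force $r$ into a non-accepting continuation while the underlying VASS still admits an accepting run. This argument is independent of the dimension, so no HD VASS of any dimension recognises $L$.

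The reachability separations \cref{lemma:rhvsrd,lemma:rhvsrn} follow the same template with acceptance requiring zero counters. I would adapt the coverability witnesses by appending a deterministic ``draining'' tail on a fresh terminal letter—which preserves HDness and keeps the original lower-bound argument intact—or design bespoke languages in which equality constraints are intrinsic. The main obstacle I anticipate is the dimension-independent lower bound: precluding $1$-dimensional deterministic or HD VASSs is routine, but pushing the argument uniformly to arbitrarily many counters requires a robust combinatorial core—pumping plus well-quasi-ordering of reachable configurations—that does not collapse when extra counters are added to the putative model.
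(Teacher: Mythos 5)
Your overall architecture coincides with the paper's: four witness languages, each recognisable in dimension $1$ of the larger class, each separated from the smaller class uniformly in the dimension, with monotonicity in $k$ finishing the job; and your toolbox (language-maximal positional resolvers for the upper bounds, pumping plus Dickson's lemma for the lower bounds) is the right one. The last-block language you describe for $\CHvass[k]\subsetneq\CNvass[k]$ is exactly the paper's $\LnotHDVASS=(a+b)^*a^nb^{\le n}$. However, two of your four sketches contain gaps that would block the proof as written.

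First, your separating principle for $\CDvass[k]\subsetneq\CHvass[k]$ — that a language forcing ``unboundedly many genuinely distinct residual languages'' after suitable prefixes cannot be deterministic in any dimension — is false as stated: $a^nb^{\le n}$ lies in $\CDvass[1]$ yet has pairwise distinct residuals after every $a^n$. Distinct residuals are cheap for deterministic VASS because the counters are unbounded. The missing idea is a \emph{liveness} component: the paper uses $\LnotDVASS=a^nb^{\le n}+a^*b^*c$, where the summand $a^*b^*c$ forbids the deterministic run from ever dying on an $a^*b^*$ prefix. Then on $a^{n+1}b^{n+1}$ the unique run revisits some accepting state with counter vectors $v,v'$; if $v'\ge v$ the pumped run wrongly accepts $a^{n+1}b^m$ for $m>n+1$, and otherwise some coordinate strictly decreases, so iterating the cycle kills the run and the word $a^{n+1}b^{i+(j-i)\ell}c\in\LnotDVASS$ is wrongly rejected. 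Without a component of the second kind there is no contradiction in the decreasing case.

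Second, your proposed adaptation to reachability acceptance by ``appending a deterministic draining tail on a fresh terminal letter'' does not work for the $\eps$-free classes the theorem is about: a single end-marker letter can only change the counters by a bounded amount, so an unbounded counter cannot be emptied on it, and indeed the paper shows $\Lanblenbarrier=a^nb^{\le n}\#$ lies in $\RHvassE$ but \emph{not} in $\RHvass$. Only your fallback option — bespoke languages with intrinsic equality constraints — goes through; the paper uses $\Lanbgen=a^nb^{\ge n}$ (each $b$ may absorb one unit, so the resolver can reach zero exactly when $m\ge n$) for $\RDvass[k]\subsetneq\RHvass[k]$, and $\Lanblen=a^nb^{\le n}$ for $\RHvass[k]\subsetneq\RNvass[k]$, where the key observation is that $a^n\in\Lanblen$ forces the resolver's counters to be zero after $a^n$, collapsing the machine to a finite-state device that cannot separate $a^nb^n$ from $a^nb^{n+1}$. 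You should commit to witnesses of this second kind and drop the draining-tail route.
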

\begin{lemma}
\label{l1notdvass}
\label{lang:LnotDVASS}
	$\LnotDVASS \eqdef a^nb^{\le n} + a^*b^*c \in \CHvass[1] \setminus \CDvass$.
\end{lemma}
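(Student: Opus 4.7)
The plan is to prove the two claims separately.

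For the upper bound $L \in \CHvass[1]$, I construct an explicit $1$-VASS $\?A$ with four control states $q_0$ (initial, final), $q_1$ (final), $q_2$ (non-final), and $q_c$ (final). Transitions from $q_0$ self-loop on $a$ with effect $+1$, move to $q_1$ on $b$ with effect $-1$ or to $q_2$ on $b$ with effect $0$, and move to $q_c$ on $c$; from $q_1$ one can decrement on $b$ (self-loop), move to $q_2$ on $b$ with effect $0$, or go to $q_c$ on $c$; from $q_2$ one loops on $b$ or moves to $q_c$ on $c$. The accepting runs split cleanly along the two disjuncts of $L$: runs staying within $\{q_0,q_1\}$ produce exactly the words $a^n b^m$ with $m \le n$, while any run reaching $q_2$ must extend with a $c$ to reach $q_c$, yielding $a^*b^*c$. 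The positional resolver decrements on $b$ whenever the counter is positive, and switches to $q_2$ only when the counter is $0$. This accepts every $a^n b^m$ with $m \le n$ by staying within the counting branch, and for $a^i b^j c$ it either decrements all $b$'s (when $j \le i$) or depletes the counter, switches to $q_2$, and continues through the remaining $b$'s before reading $c$.

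For the lower bound $L \notin \CDvass$, suppose toward contradiction that $\?A$ is a deterministic $k$-VASS with $\Lang{\?A} = L$. Write $(p_n, w_n)$ for the unique configuration reached after reading $a^n$ from the initial configuration. By the pigeonhole principle some control state $p^*$ equals $p_n$ for infinitely many $n$, and Dickson's lemma then yields $n_1 < n_2$ in this set with $w_{n_1} \le w_{n_2}$ componentwise. The key point is that $a^{n_1} b^m c \in L$ for every $m \ge 0$, so by determinism the unique run of $\?A$ on $a^{n_1} b^m c$ is accepting, and in particular its $b^m$-portion is enabled from $(p^*, w_{n_1})$. Hence the control-state sequence $p^* = r_0, r_1, r_2, \ldots$ reached by reading successive $b$'s from $p^*$ is well-defined; and since $w_{n_2} \ge w_{n_1}$ the same trajectory is enabled from $(p^*, w_{n_2})$ and, by determinism, visits the same control states. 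But $a^{n_1} b^{n_1+1} \notin L$ forces $r_{n_1+1}$ to be non-accepting, whereas $a^{n_2} b^{n_1+1} \in L$ (as $n_1+1 \le n_2$) forces it to be accepting---contradiction.

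The main obstacle is the enabledness step above: one must argue that the $b$-trajectory from $(p^*, w_{n_1})$ never gets stuck, so that the control-state sequence $r_0, r_1, \ldots$ is actually defined far enough. This is precisely where the second disjunct $a^*b^*c$ of $L$ earns its place, because it forces every prefix $a^{n_1} b^m$ to extend to a word in $L$ and thereby forces all $b$-transitions to remain enabled in the deterministic automaton. Without this disjunct the obvious deterministic VASS for $a^n b^{\le n}$ suffices and the argument collapses, since it can safely reject $a^n b^{n+1}$ simply by getting stuck---exactly the escape route the added $a^*b^*c$ closes off.
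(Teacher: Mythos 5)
Your proof is correct. The upper bound is the same construction as the paper's (an explicit $1$-VASS whose only non-determinism is the $b$-choice between the decrementing branch and the ``wait for $c$'' branch, resolved by decrementing while the counter is positive), and your verification that this resolver is language-maximal is sound. For the lower bound both arguments rest on the same key observation --- the disjunct $a^*b^*c$ forces every $b$-run of a deterministic VASS to stay enabled --- but the mechanics differ: the paper fixes the prefix $a^{n+1}$, finds a repeated control state inside the $b$-block, and case-splits on whether the resulting $b$-cycle has non-negative effect (pump it up to accept a word with too many $b$'s) or a strictly negative component (pumping kills the run, contradicting acceptance of the $c$-extension). You instead apply Dickson's lemma to the configurations reached after $a^{n_1}$ and $a^{n_2}$ and transfer the entire $b$-trajectory from the smaller configuration to the larger one, comparing acceptance of $a^{n_1}b^{n_1+1}$ and $a^{n_2}b^{n_1+1}$. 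Your variant avoids the case split entirely (and incidentally sidesteps a small off-by-one in the paper's choice of pumping exponent $i+(j-i)n$, which need not exceed $n+1$ when $i=0$ and $j=i+1$); the price is invoking Dickson's lemma where a pigeonhole on states would have sufficed. Both are valid; no gaps.
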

\begin{proof}
$\LnotDVASS$ can be recognised by the \Ahvass[1] depicted in~\cref{fig:notDVvass}. Note that the \vass is HD: the only non-deterministic choice is whether to go to $q_2$ or $q_3$ on $b$, for which the resolver must always choose $q_2$ if available (if the counter is non-zero). The choice of resolver is unique as going to $q_3$ unnecessarily is not language maximal. 

For a contradiction, suppose $\LnotDVASS $ is accepted by a \Advass[k] with $n$ states. Since $w_{n+1} =a^{n+1}b^{n+1}\in \LnotDVASS$ the run is accepted. There exists $i<j \le n+1$ such that $a^{n+1}b^{i}$ is in state $q$ with counter vector $v\in\mathbb{N}^k$ and $a^{n+1}b^{j}$ is also in state $q$ with counter vector $v'\in\mathbb{N}^k$. Since $a^{n+1}b^{i}\in \LnotDVASS$, we have that state $q$ is accepting.

Suppose $v'- v \ge \vec{0}$, then $a^{n+1}b^{i+(j-i)n}\not\in \LnotDVASS$ is accepted. Therefore there exists a dimension such that $v'-v$ is negative. Hence for some $\ell$ we have $a^{n+1}b^{i+(j-i)\ell}$ is a dead run. Hence it cannot accept  $a^{n+1}b^{i+(j-i)\ell}c \in \LnotDVASS$.
\end{proof}
\begin{figure}[t]
\centering
\begin{subfigure}[t]{0.65\textwidth}
\includegraphics[width=0.9\textwidth]{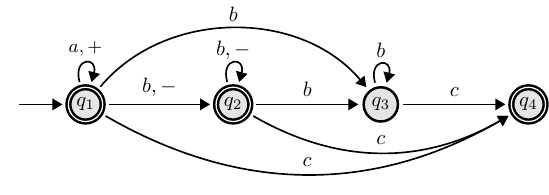}

\caption{A \Ahvass[1] recognising $\LnotDVASS$. }
\label{fig:notDVvass}

\end{subfigure}
\begin{subfigure}[t]{0.32\textwidth}
\centering
\includegraphics[width=\textwidth]{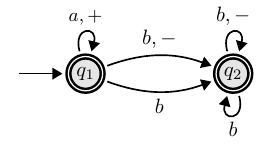}
\caption{A \ARhvass[1] recognising $\Lanbgen$.}
\label{fig:notD0vass}
\end{subfigure}
\caption{Transitions labelled with $+$ increment the counter by $1$, and those labelled by $-$ decrement the counter by $1$ and otherwise have no effect on the counter. }
\end{figure}

\begin{lemma}
\label{lemma:rhvsrd}
$\Lanbgen \eqdef a^nb^{\ge n } \in \RHvass[1] \setminus \RDvass$
\label{lang:Lanbgen}
\end{lemma}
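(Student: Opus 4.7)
The plan is to prove the lemma in two parts: first exhibit an HD $1$-VASS with reachability acceptance for $\Lanbgen$, and then show that no deterministic VASS of any dimension accepts this language.

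For the containment $\Lanbgen \in \RHvass[1]$, I would describe the automaton in \cref{fig:notD0vass}: a single accepting state $q_0$ with a self-loop on $a$ of effect $+1$ and two self-loops on $b$, one of effect $-1$ and one of effect $0$. The only non-determinism is on $b$, and the resolver chooses the decrementing transition whenever the counter is positive and the neutral one when the counter is zero. On input $a^n b^m$ with $m \ge n$ this strategy increments $n$ times, then decrements exactly $n$ times to reach counter $0$, then takes $m-n$ neutral $b$-steps, ending accepting at $(q_0,\vec{0})$. For $a^n b^m$ with $m<n$ the counter cannot reach $0$, so the resolver is language-maximal, and any word outside $a^*b^*$ has no run at all.

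For the exclusion $\Lanbgen \notin \RDvass$, I would argue by contradiction: suppose a deterministic $k$-VASS $\?A$ recognises $\Lanbgen$, and for each $n$ let $(q_n,v_n)\in\states\x\N^k$ be the unique configuration reached by $\?A$ on $a^n$. By Dickson's lemma the infinite sequence $(q_n,v_n)$ contains indices $i<j$ with $q_i=q_j=:q$ and $v_i\le v_j$. I would first show $v_i\ne v_j$: otherwise the suffix languages of $(q,v_i)$ and $(q,v_j)$ coincide, forcing $\{b^m\mid m\ge i\}=\{b^m\mid m\ge j\}$, impossible for $i<j$. Hence $v_j-v_i\in\N^k\setminus\{\vec{0}\}$.

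The remaining step exploits determinism and monotonicity in the $b$-phase. Since $a^ib^i\in\Lanbgen$, the deterministic run on $b^i$ from $(q,v_i)$ ends in some accepting $(q',\vec{0})$ with $q'\in F$. Because $v_j\ge v_i$, the very same transition sequence is enabled on $b^i$ from $(q,v_j)$ and ends in $(q',v_j-v_i)$. Similarly, since $a^ib^j\in\Lanbgen$, continuing deterministically from $(q',\vec{0})$ on $b^{j-i}$ reaches some accepting $(q'',\vec{0})$ with $q''\in F$; applying the same transitions from $(q',v_j-v_i)$ (enabled by monotonicity) yields $(q'',v_j-v_i)$. But this is the configuration reached by $\?A$ on $a^jb^j\in\Lanbgen$, and its counter is non-zero, contradicting acceptance. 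The main obstacle here is ruling out $v_i=v_j$; once strict inequality in some coordinate is secured, the pumping argument carries through cleanly.
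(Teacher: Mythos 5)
Your membership half, as written, does not work. A single accepting state with a $+1$ self-loop on $a$ and two self-loops on $b$ of effects $-1$ and $0$ accepts $abab$: read $a$ ($+1$), $b$ ($-1$), $a$ ($+1$), $b$ ($-1$), ending in the accepting state with counter $0$, yet $abab\notin\Lanbgen$. Your own closing claim that ``any word outside $a^*b^*$ has no run at all'' is false for a one-state automaton, since both letters are always enabled there. The repair is small but necessary: use two states, an initial accepting state $p$ with the $+1$ self-loop on $a$, and an accepting state $q$ reached from $p$ (and looped on) by the two $b$-transitions of effects $-1$ and $0$, with no $a$-transition out of $q$, so that only words of shape $a^*b^*$ have runs. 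With that automaton your resolver (decrement whenever the counter is positive) is indeed language-maximal and the membership argument goes through.

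Your proof that $\Lanbgen\notin\RDvass$ is correct and takes a genuinely different route from the paper's. The paper pumps the $b$-block: fixing the prefix $a^{n+1}$, some state repeats among $a^{n+1}b^0,\dots,a^{n+1}b^{n}$; the resulting cycle must consist only of accepting states (to accept $a^{n+1}b^m$ for every $m\ge n+1$), its entry configuration must have a non-zero counter (since $a^{n+1}b^i\notin\Lanbgen$), yet the counter must be $\vec{0}$ at some later point on the iterated cycle, so the cycle strictly decreases some counter and the run eventually dies on a word that is in the language. You instead apply Dickson's lemma to the configurations reached after $a^n$, exclude $v_i=v_j$ by comparing the residual languages $b^{\ge i}\ne b^{\ge j}$, and then transport the strictly positive difference $v_j-v_i$ through the deterministic $b$-run to exhibit $a^jb^j\in\Lanbgen$ whose unique run ends with a non-zero counter. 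Your version reaches the contradiction in one step and avoids the ``iterated cycle eventually blocks'' analysis; the paper's version avoids Dickson's lemma and thereby yields an explicit bound on the witness in terms of the number of states. Both are sound.
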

\begin{proof}
$\Lanbgen$ is recognised by the \ARhvass depicted in~\cref{fig:notD0vass}. On $b$ the resolver can choose between decrementing the counter and no effect, the resolver will always choose to decrement whenever the counter is non-zero.

We have $\Lanbgen\not\in\RDvass$. Suppose a $\ARdvass$ with $n$ states exists, consider the run on the word $w_{n+1} = a^{n+1}b^{n+1}\in\Lanbgen$. There exists two prefixes of the run in which $a^{n+1}b^{i}$ and $a^{n+1}b^{j}$ revisit a state, and so the system cycles through states on extension of $a^{n+1}b^{i}$ with $b^*$. Thus, in order to accept $w_{n+1}b^i$ for all $i$ the automaton must visit only accepting states throughout the cycle. Since $a^{n+1}b^{i}\not\in \Lanbgen$ the counter must be non-zero, but zero at $u = a^{n+1}b^{i + (j-i)n}$ since $u \in L$, thus the effect of the cycle is decreasing on some counter, there must exist $k > n$ such that the run is dead on $a^{n+1}b^{i+(j-i)k}$. This is a contradiction as  $a^{n+1}b^{i+(j-i)k}\in\Lanbgen$.
\end{proof}

\begin{lemma}\label{lemma:lastblocknotHD}
\label{lang:LnotHDVASS}
	$\LnotHDVASS\eqdef \{a,b\}^*a^{n > 0}b^{\le n} \in\CNvass[1] \setminus \CHvass$. 
\end{lemma}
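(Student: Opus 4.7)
$\LnotHDVASS \in \CNvass[1]$ is witnessed by a non-deterministic $1$-VASS with an ``idle'' state reading $\{a,b\}$ with no counter change, and a ``counting'' state entered non-deterministically on any $a$, which increments on $a$ and decrements on $b$. Under coverability semantics, acceptance in the counting state verifies $m \le n$ in the chosen final $a^n b^m$ suffix, while the idle state freely absorbs the leading $\{a,b\}^*$ prefix.

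\textbf{Non-membership.} Suppose for contradiction $\?A$ is an HD $k$-VASS with positional resolver $r$ and $n$ states recognising $\LnotHDVASS$. The pivotal observation is a \emph{residual equality}: for any configuration $(q, v)$ that the resolver reaches after reading a prefix $u$, one has $\Lang[\?A]{q,v} = u^{-1} \LnotHDVASS$. The inclusion $\subseteq$ concatenates the resolver's prefix run with any accepting run from $(q,v)$; the converse uses the HD property, which forces the resolver from $(q,v)$ to accept every word in the residual $u^{-1}\LnotHDVASS$.

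Apply this to the resolver's run on $a^N b^N$ for some $N \ge n$ and let $(q_i, v_i)$ be the configuration reached after reading $a^N b^i$. Then $\Lang[\?A]{q_i,v_i} = (a^N b^i)^{-1} \LnotHDVASS$ contains $b^{N-i}$ but excludes $b^{N-i+1}$. Now if $q_i = q_j$ and $v_i \le v_j$ coordinate-wise for some $i < j$, coverability monotonicity would place $b^{N-i}$ into $\Lang[\?A]{q_j,v_j}$, forcing $a^N b^{j+N-i} \in \LnotHDVASS$ and thus $j \le i$, a contradiction. Consequently the $N+1$ configurations $(q_0,v_0),\ldots,(q_N,v_N)$ form an antichain in $Q \times \N^k$ under the componentwise order.

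For $k = 1$ this finishes the proof: $\N$ is totally ordered, so antichains in $Q \times \N$ have size at most $|Q| = n$, contradicting $N+1 > n$. The \emph{main obstacle} is extending the argument to $k \ge 2$, where antichains in $\N^k$ can be arbitrarily large. One route is to exploit that the antichain is a \emph{consecutive path} whose steps are bounded by $\|\delta\|$ in each coordinate, and whose ambient counter values are controlled by the resolver's $a^N$-prefix run (itself subject to Dickson constraints, so counters grow at most linearly in $N$); combining these path-length bounds with finer antichain estimates in bounded boxes should suffice. An alternative is to chain many blocks $(a^{N_t}b^{N_t})_t$ and apply a diagonal Dickson extraction to the sequence of post-block configurations to force a dominated pair at matching residuals, yielding the same residual-mismatch contradiction.
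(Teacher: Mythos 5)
Your membership construction has a small bug: with a single ``counting'' state that increments on $a$ and decrements on $b$, the automaton accepts $a^3bab^2\notin\LnotHDVASS$ (enter counting on the first $a$; all prefix sums stay nonnegative). The guess must be of the \emph{last} block, so the counting phase needs two states, one reading $a$'s (incrementing) with a one-way switch to one reading only $b$'s (decrementing); this is easily repaired.

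The real problem is the non-membership half. Your residual-equality observation and the resulting antichain argument are correct, and they do settle the case $k=1$. But the lemma asserts $\LnotHDVASS\notin\CHvass=\bigcup_k\CHvass[k]$, and for $k\ge 2$ you only offer two sketches, neither of which works as stated. The first route fails because the antichain you obtain from a single block $a^Nb^N$ lives (in the worst case) in a box of side linear in $N$, and $\N^2$ admits antichains of length $N+1$ inside such a box with consecutive steps bounded by $\norm{\delta}$ --- take the anti-diagonal $(i,C-i)$ --- so no contradiction arises from counting. The second route fails because the residuals after the balanced prefixes $w_\ell=a^{m_1}b^{m_1}\cdots a^{m_\ell}b^{m_\ell}$ are all the \emph{same} language; a Dickson-dominated pair of post-block configurations is therefore perfectly consistent with residual equality and yields no mismatch. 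The paper's proof needs a genuinely different mechanism: inside each $a$-block it extracts (via Dickson's lemma) a cycle with non-decreasing effect $u_\ell-v_\ell\ge\vec 0$ on the resolver's run; taking $2^k+1$ blocks forces two blocks $\ell<\ell'$ whose cycle effects have the same support, so some multiple $R(u_\ell-v_\ell)$ dominates $u_{\ell'}-v_{\ell'}$. Repeating the early cycle $R$ extra times and deleting the late cycle then yields a valid (counters never go negative) accepting run on a word whose last block has more $b$'s than $a$'s. This pumping-and-transfer step, not an antichain bound, is the missing idea, and without it your argument does not establish the statement for any $k\ge 2$.
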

\begin{proof}
$\LnotHDVASS$ can be accepted a $\Anvass[1]$, which non-deterministically guesses the start of the last $a^*b^*$ block and accepts if there are fewer $b$'s than $a$'s.

We show that $\LnotHDVASS\not\in \CHvass$. Suppose for contradiction there is a $k$-$\hdvass$ with $|Q|$ states, $\norm{\delta}$ the largest effect on a counter in any transition and a resolver $r$.

	Consider a sequence of accepted words $w_\ell = w_{\ell-1} a^{m_{\ell}}b^{m_{\ell}}$, with $w_{0}$ the empty word, where $m_\ell$ is large enough so that there exist $ r_{\ell,1} < r_{\ell,2} \le m_{\ell}$, such that the run given by the resolver $r$ on  
	$w_{\ell-1} a^{r_{\ell,1}}$ has configuration $(q_\ell, v_\ell)$ and $w_{\ell-1} a^{r_{\ell,2}}$ has $(q_\ell, u_\ell)$, with $u_\ell \ge v_\ell$.
	In other words, whilst reading $a^{m_{\ell}}$, the run encounters a cycle on state $q_{\ell}$ which does not strictly decrease any counter value. This occurs due to Dickson's lemma and depends on $|Q|, \norm{\delta}, k$ and $m_{1},\dots,m_{\ell-1}$.
	This gives an inductive way to build words $w_{\ell}$ consisting of $\ell$ blocks of $a$s and $b$s such that each $a$-block visits a non-decreasing cycle.
	We consider the word $w_{n}$ for $n = 2^{k} + 1$ and the run $\rho$ on $w_n$ given by the resolver.

	Given a vector $v\in \N^k$, we define $support(v) = \{i\mid v_i \ne 0\}$. Since there are $n$ blocks of $a$ in $w_n$, each of which has a non-decreasing cycle $(q_\ell,u_\ell)$ and $(q_\ell,v_\ell)$, for $\ell\in \{1,\dots,n\}$. However, there are $2^k+1$ possible choices for  $support(u_\ell- v_\ell)$. Therefore, there exists $\ell< \ell'$ such that $support(u_\ell- v_\ell) = support(u_{\ell'}- v_{\ell'})$. In other words, there are two $a$-blocks which have a non-decreasing cycle such that the effect of the cycles have the same support.
	Let $R\in \N$ be such that $R(u_\ell - v_\ell) \ge u_{\ell'} -v_{\ell'}$, which exists since $support(u_{\ell}-v_{\ell})=support(u_{\ell'}-v_{\ell'})$ and 
	$u_{\ell}-v_{\ell}\ge\vec{0}$ and $u_{\ell'}-v_{\ell'}\ge\vec{0}$.
	
	Let $u$ be the word such that $w_{\ell'-1} = w_{\ell} u $, i.e, the part between the $\ell$th $b$-block and $\ell'$th $a$-block.
	Consider the word $w' =  w_{\ell-1} a^{m_\ell + R(r_{\ell,2} - r_{\ell,1})}b^{m_\ell}  u   a^{m_{\ell'} - (r_{\ell',2} -r_{\ell',1})}b^{m_{\ell'}}$.
	The word $w'$ is therefore obtained by adding $R(r_{\ell,2}-r_{\ell,1})$ many $a$'s in the $\ell$th $a$-block and removing $(r_{\ell',2}-r_{\ell',1})$ many $a$'s from the $\ell'$th $a$-block. Note that $w'\not\in \LnotHDVASS$, since the last block has more $b$'s than $a$'s. We will show that there is an accepting run on $w'$, by modifying the resolver run on $w_\ell'$.
	
	Let $\rho_{\ell'}$ be the run on $w_{\ell'}$ given by the resolver $r$. We consider the run $\rho'$ where we take the cycle between $(q_{\ell},v_{\ell})$ and $(q_{\ell},u_{\ell})$ an additional $R$ times in the $\ell$-th $a$-block, but removes the cycle between $(q_{\ell'},v_{\ell'})$ and $(q_{\ell'},u_{\ell'})$. We show that $\rho'$ is a run on $w'$. To see this, we must verify  that no counter drops below zero in $\rho'$.
	Note that the runs $\rho_{\ell'}$ and $\rho'$ are the same till the prefix $w_{\ell-1}a^{r_{\ell,2}}$ after which it reaches the configuration $(q_\ell,u_\ell)$. Then it does $R$ additional cycles which results in the configuration $(q_\ell,u_\ell+ R(u_\ell-v_\ell))$. From this point $\rho'$ follows the same sequence of transitions as $\rho_{\ell'}$ till it reads the prefix up to $w_{\ell'-1}a^{r_{\ell',1}}$ ending up in the configuration $(q_{\ell'},v_{\ell'}+ R(u_\ell-v_\ell))$. Since $v_{\ell'}+ R(u_\ell-v_\ell)\geq v_{\ell'} + (u_{\ell'}-v_{\ell'})=u_{\ell'}$, $\rho'$ can follow the suffix of the run $\rho_{\ell'}$ from $(q_{\ell'},u_{\ell'})$ on $a^{m_{\ell'}-r_{\ell',2}}b^{m_{\ell'}}$, which ends in the same state as $\rho_{\ell'}$ with a non-zero counter value.
	This is a contradiction as we get a accepting run on $w'\notin\LnotHDVASS$. We conclude that there is no $k{-}\hdvass$ that recognises the language $\LnotHDVASS$.
\end{proof}

\begin{figure}
\centering
\includegraphics[width=\textwidth]{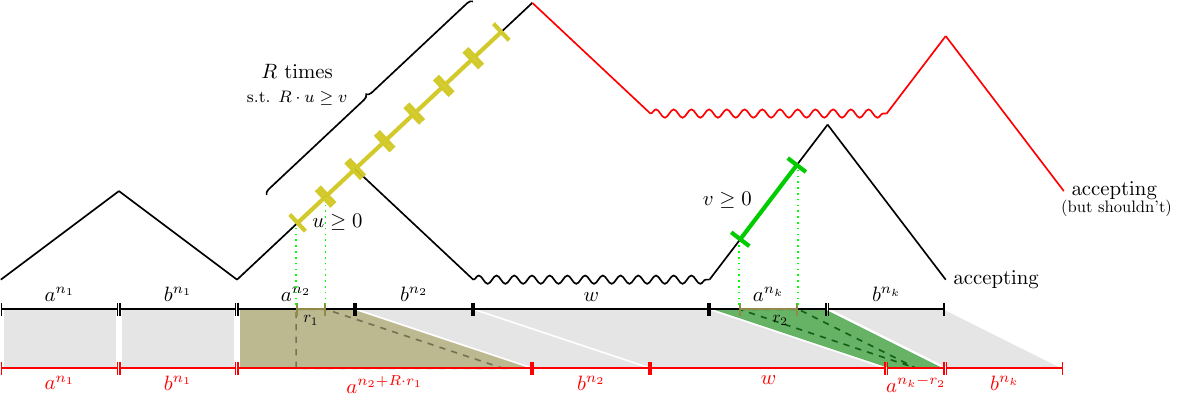}
\caption{Proof that $\LnotHDVASS \not\in \CHvass$ (\cref{lemma:lastblocknotHD}). For two cycles of lengths $r_1,r_2$ chosen in different $a^*$-blocks with effects $u,v \ge \vec{0}$ and $support(u) = support(v)$, repeating the first cycle and removing the second one
constructs an accepting run on a word $\notin\LnotHDVASS$.}
\label{fig:notHDpumping}
\end{figure}

\begin{lemma}
\label{lemma:rhvsrn}
$\Lanblen \eqdef a^nb^{\le n } \in \RNvass[1] \setminus \RHvass$
\label{lang:Lanblen}

\end{lemma}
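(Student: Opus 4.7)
The plan is to separate membership in $\RNvass[1]$ from exclusion from $\RHvass$.

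For $\Lanblen \in \RNvass[1]$, I would exhibit a single-state $1$-VASS whose unique state is both initial and final, equipped with two $a$-transitions (effects $0$ and $+1$) and one $b$-transition (effect $-1$). On input $a^n b^m$ a run ends at counter $0$ iff the number of $+1$-guesses among the $n$ $a$-reads equals $m$; such a run exists iff $m \le n$, so the recognised language is exactly $\Lanblen$.

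For $\Lanblen \notin \RHvass$, I would assume for contradiction that some HD $k$-VASS $\?A$ recognises $\Lanblen$ with a (WLOG positional) resolver $r$. The critical observation is that every $a$-prefix $a^n$ lies in $\Lanblen$, so $r$'s run on $a^n$ must end in an accepting configuration. Under reachability acceptance this forces $r$ to reach $F \times \{\vec{0}\}$ after reading \emph{each} $a$, pinning the counter to $\vec{0}$ throughout the entire $a^*$-prefix. Letting $q_n$ denote the state reached after $a^n$, the pigeonhole principle yields $r_1 < r_2$ with $q_{r_1} = q_{r_2} =: q^*$.

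By positionality, $r$'s continuation from $(q^*, \vec{0})$ reading $b^m$ does not depend on the preceding $a$-prefix. Let $N = \{m \in \N : r\text{ from }(q^*, \vec{0})\text{ reaches }F \times \{\vec{0}\}\text{ after reading }b^m\}$. Since the resolver's run accepts exactly the words in $\Lang{\?A} = \Lanblen$, for each $i \in \{1,2\}$ we have $m \in N$ iff $a^{r_i} b^m \in \Lanblen$ iff $m \le r_i$. Therefore $N = \{0,\dots,r_1\} = \{0,\dots,r_2\}$, which is impossible since $r_1 < r_2$.

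The main step is recognising that reachability acceptance, together with $a^n \in \Lanblen$ for all $n$, forces the counter to $\vec{0}$ after each $a$-read, which reduces what would otherwise be a Dickson-style pumping argument to a pure finite-state pigeonhole on the state coordinate alone. Under coverability acceptance the counter could vary freely along the $a^*$-prefix and this short argument would not apply, matching the fact that $\Lanblen \in \CDvass[1]$.
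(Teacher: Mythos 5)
Your argument for $\Lanblen \notin \RHvass$ is correct and is essentially the paper's: since every prefix $a^n$ is in the language, reachability acceptance pins the resolver's configuration to $F\times\{\vec 0\}$ after each $a$, and a pigeonhole on the state coordinate yields two prefixes $a^{r_1}, a^{r_2}$ ($r_1<r_2$) from which the same set of $b$-continuations is accepted, contradicting $b^{\le r_1}\neq b^{\le r_2}$. (You route this through positionality of the resolver; one can also avoid that by gluing the resolver's run on $a^{r_1}$ to the accepting continuation on $b^{r_2}$ that must exist from $(q^*,\vec 0)$, producing an accepting run on $a^{r_1}b^{r_2}\notin\Lanblen$. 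Both are fine.)

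The positive direction, however, has a genuine error. Your single-state automaton, with the unique state both initial and final, two $a$-loops of effect $0$ and $+1$, and a $b$-loop of effect $-1$, does \emph{not} recognise $\Lanblen$: nothing enforces the shape $a^*b^*$. For instance the word $aba$ is accepted via the run $+1$ on $a$, $-1$ on $b$, $0$ on $a$, ending in the final state with counter $0$, yet $aba\notin\Lanblen$. More generally your automaton accepts exactly the words in which every prefix contains at least as many $a$'s as $b$'s. The fix is the construction the paper sketches: keep the nondeterministic guess ($0$ or $+1$) on $a$ in an initial state, but move to a \emph{second} (also final) state on reading $b$, from which only the decrementing $b$-transition is available and no $a$-transition exists. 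That second state is exactly what forces inputs to lie in $a^*b^*$, after which your counting argument (a run reaches counter $0$ iff the number of $+1$-guesses equals the number of $b$'s, which is realisable iff $m\le n$) goes through.
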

\begin{proof}
In the non-deterministic case  reachability semantics can recognise 
$\Lanblen \in \RNvass[1] $: On $a$, non-deterministically choose either to increment by $1$ or not, guessing ahead of time how many $b$'s will be seen. On $b$, the machine moves to a new state and counts down, preventing more $b$'s than the guessed number.

However $\Lanblen$ cannot be recognised with history-determinism. To see this, observe that since $a^n \in \Lanblen$ all the counters must be zero after reading $a^n$, then, for $n$ larger than the number of states, the machine cannot distinguish $a^nb^n\in \Lanblen$ and 
$a^nb^{n+1}\not\in \Lanblen$. 
\end{proof}

\subsection{Silent transitions} %
\label{sub:expr:eps}

First observe that
\label{lang:Lanbn}
$\Lanbn \eqdef a^nb^n$
can be recognised with reachability semantics (even $\RDvass$), but cannot be recognised under coverability semantics (even \CNvassE). On the other hand  $\Lanblen = a^nb^{\le n }$ can be recognised by coverability semantics (even $\CDvass$), but cannot be recognised by $\RHvassE$, thus together $\Lanblen$ and $\Lanbn$ show pairwise incomparability between reachability and coverability semantics for deterministic and history-deterministic systems. 
However, if the languages have an end marker then coverability acceptance can be turned into reachability acceptance (with $\epsilon$-transitions) as $\epsilon$-transitions can be used to take the counters to zero at the end marker.

The separation between $\CNvass$ and $\CNvassE$ is due to~\cite{G1978} for which 
\label{lang:LmustVASSe}
$\LmustVASSe \eqdef \mathit{bin}(n)\#0^{\le n}\# \in \CNvassE\setminus \CNvass$, where $bin(n)$ is the binary representation of $n\in\N, n > 0$ in $1\{0,1\}^*$. 
This language cannot be recognised without $\epsilon$ transitions (see \cref{app:epsrequired} for details).
We observe that the same language separates \CHvass and \CHvassE, as the $2$-VASS of \cite{G1978} recognising $\LmustVASSe$ is in fact history-deterministic. 
However, in dimension $1$, the two classes collapse: %

\begin{restatable}{lemma}{hdoneishdonee}
\label{lemma:hd1ishd1e}
$\CHvass[1] = \CHvassE[1]$.
\end{restatable}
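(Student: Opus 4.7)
The inclusion $\CHvass[1] \subseteq \CHvassE[1]$ is immediate, since every $\epsilon$-free VASS is a special case of an $\epsilon$-VASS. For the converse, given an HD 1-VASS $\mathcal{A}$ with $\epsilon$-transitions (coverability semantics) and a positional resolver $r$, the plan is to construct an equivalent $\epsilon$-free HD 1-VASS $\mathcal{A}'$ by absorbing the $\epsilon$-moves into the letter-reading transitions. The starting point is a standard analysis of the $\epsilon$-sub-VASS obtained by restricting $\transitions$ to its $\epsilon$-labelled transitions: for each pair $(p, q) \in \states^2$ one computes the set $E(p, q) \subseteq \Z$ of effects of $\epsilon$-paths from $p$ to $q$ that are feasible starting from a sufficiently large counter, along with the minimum counter threshold $c(p, q) \in \N$. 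In dimension one these sets are semilinear and can be described finitely.

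Since coverability languages are monotone non-decreasing in the counter value, the resolver should prefer $\epsilon$-paths that maximise the post-$\epsilon$ counter value when aiming at a given destination state. I would first dispose of the case where no state can $\epsilon$-reach a strictly positive $\epsilon$-cycle: here $\sup E(p, q)$ is a finite integer $e_{\max}(p, q)$ for each $(p,q)$, and for each letter-transition $(q, a, e, q'')$ of $\mathcal{A}$ and each pair of pre-/post-closure states one adds a single direct transition $(p, a, e_{\max}(p,q) + e + e_{\max}(q'', q'''), q''')$ to $\mathcal{A}'$, yielding a bounded number of new transitions. Correctness then reduces to language-maximality: the resolver of $\mathcal{A}'$ simulates $r$ by choosing the transition that mimics $r$'s $\epsilon$-choice of maximal effect.

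The main obstacle is the case where some state $p$ has access to a positive $\epsilon$-pump, in which case $\sup E(p, q) = +\infty$ and no single transition of bounded effect in $\mathcal{A}'$ can replace the pump. The intended argument is that in dimension one, the positional HD resolver need only take unbounded pumping at configurations of bounded counter: since higher counter is always at least as good, the resolver's pumping is only ``useful'' while the counter is below a computable threshold, and its state-changing $\epsilon$-moves can be collapsed into finitely many canonical choices by monotonicity. Concretely, I would argue that $r$ may be replaced by a positional resolver whose $\epsilon$-behaviour stabilises once the counter exceeds a threshold $N$ depending only on $|\states|$ and $\norm{\delta}$, so that all $\epsilon$-transitions occur at configurations within the finite set $\states \times \{0, \dots, N\}$; these can be hard-coded into the control of $\mathcal{A}'$ together with appropriate letter-transitions of bounded effect. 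The final step is then to verify that $\mathcal{A}'$ recognises the same language as $\mathcal{A}$ and is itself HD, by coupling its runs to those driven by $r$ in $\mathcal{A}$. The hardest part is establishing the threshold-stabilisation claim: this is where the monotonicity of coverability languages and the dimension-one restriction are both essential.
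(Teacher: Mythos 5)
Your easy inclusion and your handling of the case without accessible positive $\eps$-cycles are essentially the paper's argument: there one also merges the (finitely many, effect-maximal) $\eps$-paths into the adjacent letter-reading transitions, noting that negative cycles are never worth iterating under coverability and that zero cycles only contribute finitely many destinations. The genuine gap is in the case you yourself flag as the main obstacle. Your threshold-stabilisation claim --- that a positional resolver's $\eps$-moves can be confined to configurations in $\states\times\{0,\dots,N\}$ and then hard-coded --- is false. Consider a $1$-VASS reading $a$ into a state $p$ that carries a $+1$-effect $\eps$-self-loop, followed by an $\eps$-move to an accepting state $q$ with a $b$-labelled self-loop of effect $-1$. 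Its language is $ab^*$, but to accept $ab^m$ every run must iterate the $\eps$-pump at least $m$ times, passing through configurations with counter exceeding any fixed $N$; pumping remains ``useful'' at arbitrarily high counter values because each extra iteration strictly enlarges the residual language. Hard-coding the behaviour below a threshold produces an $\eps$-free automaton accepting only $ab^{\le N'}$ for some bound $N'$, which is not the language.

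The missing idea is what the paper uses to close exactly this case: once a positive $\eps$-pump is accessible (above its feasibility threshold), the supremum of the reachable counter values is infinite, and the corresponding residual language is $\bigcup_{m}\Lang{q,m}$, which equals the language of the \emph{underlying finite automaton} from $q$ with the counter constraint erased --- a regular language. So instead of trying to realise a large counter value with bounded-effect transitions, the $\eps$-free construction routes such transitions into a separate, counter-free copy of the control graph (all effects zero) that simulates ``counter $=\infty$''. With that extra copy, together with your finite-supremum analysis for the remaining $\eps$-paths and the language-maximality argument for HDness of the result, the proof goes through; without it, the positive-pump case cannot be repaired along the lines you propose.
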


While in coverability semantics, the presence of $\epsilon$-transitions separates languages recognised by $\Ahvass[k]$ and $\AhvassE[k]$ only for dimensions $k\ge 2$, in reachability semantics the separation occurs already in dimension $1$:
\label{lang:Lanblenbarrier}
$\Lanblenbarrier \eqdef a^nb^{\le n}\#$  is in $\RHvassE$ but not in $\RHvass$.

\subsection{Comparison with Finitely Sequential VASS}
\label{sub:expr:fs}
Recall that finitely sequential \vass are the union of finitely many $\dvass$. In \cref{lem:unionclosed}
we show that language of a finite union of history-deterministic VASS is also history-deterministic. In particular, the deterministic VASSs comprising the finitely sequential VASS are themselves history-deterministic, so any finitely sequential VASS has an equivalent history-deterministic VASS recognising the same language. On the other hand, we show that history-deterministic VASS with coverability acceptance are strictly more powerful:

\begin{lemma}
There exists a language in $\CHvass[1]$ that  is not finitely sequential.
\end{lemma}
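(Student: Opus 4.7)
The plan is to exhibit an explicit language $L \in \CHvass[1]$ that admits no decomposition as a finite union of deterministic VASS languages of any dimension, then prove the separation via a Dickson-plus-pigeonhole pumping argument.

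A natural candidate is
\[
L \;=\; \{a^m b^n : n \le m\} \;\cup\; \{a^m c b^n : n \ge m\}
\]
over $\Sigma = \{a,b,c\}$. I would first construct an HD $1$-VASS recognising $L$: the counter tracks the number of $a$'s, letter $b$ in the pre-$c$ phase decrements and moves to a post-$b$ accepting state (witnessing $n \le m$), and reading $c$ branches into either a ``waiting'' state or a ``saturated'' accepting sink, with the positional resolver choosing the sink exactly when the counter is zero. From the waiting state, $b$'s decrement while possible and a branching transition drives the machine into the sink precisely as the counter is exhausted; subsequent $b$'s are absorbed in the sink. Any $\epsilon$-transitions used in the construction are justified by the paper's $\CHvass[1] = \CHvassE[1]$ collapse.

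For the separation, suppose for contradiction $L = \bigcup_{i=1}^{N} L(A_i)$ with each $A_i$ a deterministic $k_i$-VASS. Consider the family $w_m = a^m c b^m \in L$, for $m \in \N$. By pigeonhole some fixed $A_i$ accepts infinitely many $w_m$; let $(q(m), v(m))$ be the unique configuration reached by the deterministic run of $A_i$ after the prefix $a^m c$. Dickson's lemma applied to the well-quasi-ordered set $Q_i \times \N^{k_i}$ yields accepted indices $m < m'$ with $q(m) = q(m')$ and $v(m) \le v(m')$ coordinate-wise. Since $A_i$ deterministically accepts $a^m c b^m$, replaying the fixed $b^m$-transition sequence it takes from $(q(m), v(m))$ starting instead from the coordinate-wise larger $(q(m'), v(m'))$ is non-blocking (all intermediate counters remain coordinate-wise no smaller) and ends in the same accepting state. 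Hence $A_i$ accepts $a^{m'} c b^m$, but this word contains $c$ (so is not in $L_1$) and has $m < m'$ (so is not in $L_2$), contradicting $L(A_i) \subseteq L$.

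The main obstacle is the HD construction itself: the non-deterministic VASS's language must be exactly $L$, and the resolver must make counter-sensitive choices at the branching points where the two defining clauses of $L$ meet (most delicately in the boundary cases $m = 0$ and $n = m$). This counter-sensitive commitment is exactly what no finite collection of up-front-committing deterministic components can simulate---each deterministic $A_i$ fixes its entire trajectory as a function of the input alone---and the Dickson-plus-pigeonhole pumping argument converts this intuition into the formal contradiction above.
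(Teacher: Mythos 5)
There is a genuine gap: your candidate language is not in $\CHvass[1]$ --- in fact it is not recognisable by \emph{any} VASS with coverability acceptance, in any dimension, history-deterministic or not. The culprit is the clause $\{a^m c b^n : n\ge m\}$. Coverability languages are monotone in the counters: if $(q,v)$ accepts a word and $v\le v'$, then $(q,v')$ accepts it too. Take, for each $m$, an accepting run on $a^m c b^m$ and let $(q_m,v_m)$ be its configuration after the prefix $a^m c$. By Dickson's lemma there are $m<m'$ with $q_m=q_{m'}$ and $v_m\le v_{m'}$; by monotonicity $(q_{m'},v_{m'})$ then accepts $b^{m}$, so the automaton accepts $a^{m'}cb^{m}\notin L$. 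This is essentially the paper's observation (in the proof of Theorem~\ref{thm:coverclosure}) that $a^nb^{\ge n}$ is not even in $\CNvass$. Your sketched gadget --- ``a branching transition drives the machine into the sink precisely as the counter is exhausted'' --- is a zero test, which a VASS cannot perform, and no resolver can repair this because the failure is already at the level of the underlying non-deterministic automaton. Switching to reachability semantics does not rescue the example either, since then the other clause drops out of the class ($\Lanblen\notin\RHvass$, Lemma~\ref{lemma:rhvsrn}).

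Your Dickson-plus-pigeonhole argument in the second half is sound as far as it goes, but note that it never actually uses determinism of the components $A_i$: it is word-for-word the monotonicity argument above, and therefore proves the stronger --- and for your purposes fatal --- statement that $L$ is not a coverability VASS language at all. The paper avoids this trap by using only $\le$-type comparisons: its witness $\LnotFSVASS$ consists of repeated blocks $a^{n_i}b^{n_i}$ with an escape to $\Sigma^*$ enabled exactly when some block has strictly \emph{fewer} $b$'s than $a$'s, so the single counter only ever certifies a deficit, never a surplus. The non-finite-sequentiality argument there is then genuinely about determinism: the finitely many deterministic components are forced into synchronised state cycles on $a^{m+1+xC}b^{m+1+yD}$, so they cannot separate two prefixes ($wa$ and $ua$) that the HD automaton distinguishes via its counter.
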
 

\begin{figure}
\centering
\includegraphics{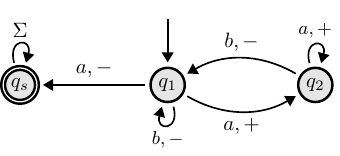}
\caption{A $1$-\hdvass automaton with language $\LnotFSVASS = \Lang{q_1,0}$ that is not finitely sequential. The automaton reads blocks of $a$'s followed by blocks of $b$'s. If some block of $a$'s is followed by fewer $b$'s then the automaton can read anything after the next $a$. If every block is followed by the same number of $a$'s and $b$'s then it must read another block of the form $a^{n}b^{n}$ or $a^{n}b^{< n}$.
The language is thus $\LnotFSVASS=\bigcup_{k=0}^{\infty} a^{n_0}b^{n_0}\dots a^{n_{k-1}}b^{n_{k-1}}a^{n_{k}} b^{<n_{k}}a\Sigma^*$.
}
\label{fig:notfs}
\end{figure}
\begin{proof}
\label{lang:LnotFSVASS}
Consider the language $\LnotFSVASS \eqdef \Lang{q_1,0}$ 
of the VASS depicted in \cref{fig:notfs}. 
Observe that it is history-deterministic: when reading $a$ at state $q_1$, the resolver goes to $q_s$ if possible. This choice is language-maximal and there is no other non-determinism to resolve.

We show the language is not finitely sequential. Suppose for contradiction the language is accepted by a finitely sequential VASS that is the union of $k$ many \Advass, each with at most $m$ states.
We consider the word $a^{m+1}b^{m+1}$, which can be extended into an accepting word in $\LnotFSVASS$, and thus is alive in some \Advass. For \Advass in which the run is still alive, reading this word goes through a cycle in the run while reading $a^{m+1}$ and similarly also whilst reading $b^{m+1}$. 

Let $c_1,\dots,c_k$ be the lengths of these cycles while reading $a$'s in each \dvass respectively, $d_1,\dots,d_k$ be the lengths of the cycles reading $b$'s, and fix
$C = \prod_{i\le k} c_i$
and $D= \prod_{i\le k} d$. 
Observe that, for every $x$, for each $\dvass$ the same state is reached after reading $a^{m+1+xC}$. Similarly, for any $y$, the same state is reached after reading $a^{m+1+xC}b^{m+1+yD}$. 
In particular, fix words $w = a^{m+1+CD}b^{m+1+CD}$ and the words $u = a^{m+1+CD}b^{m+1+(C-1)D}$.

Observe that after reading $ua$, the system in  \cref{fig:notfs} can be in state $q_s$ and therefore, any extension of $ua$ is accepted.
However, the automaton of \cref{fig:notfs} can only reach state $q_2$ on $wa$ and so, for any $z \in\N, i \ge 1$, $wa^{z}b^{z+i} \not\in \LnotFSVASS$.
Consider this word for $z = m+1$. Since there is a cycle somewhere while reading $b^{z}$, then when reading more $b$'s the automaton visits only states on that cycle. Since $wa^{z}b^{z+i} \not\in \LnotFSVASS$ for $i\ge1$ either every state on the cycle is non-accepting, or the cycle has a negative effect on at least one counter and therefore becomes unavailable for large enough $i$.

Recall, in $M$ both $wa$ and $ua$ are in the same control location in each constituent $\dvass$, and thus for any $v\in\Sigma^*$ we have
$wav$ and $uav$ reach the same control locations (or possibly the run is dead). However, for every $z,i$, there is some \dvass in which the word $ua^{z}b^{z+i}$ is accepting. However, we have argued that for every \dvass, for sufficiently large $i$, the run on $wa^{z}b^{z+i}$ is stuck in a rejecting cycle, or a cycle in which the counter is decreasing. Thus for sufficiently large $i$, in every \dvass, either the run on $ua^{z}b^{z+i}$ is also dead or in a rejecting cycle, which contradicts $ua^{z}b^{z+i}\in \LnotFSVASS$.
\end{proof}

\section{Closure Properties}
\label{sec:closure}

We take a look at closure properties of the classes $\CHvass$ and $\RHvass$
recognised by history-deterministic VASSs in coverability and reachability semantics, respectively.

Union closure (of \CHvass and \RHvass) and closure under intersection (for \CHvass) can be shown using a straightforward product construction at the cost of increasing the dimension.

\begin{restatable}{lemma}{lemunion}
    \label{lem:unionint}
    \label{lem:unionclosed}
 	Let $L\in\CHvass[k]$ and $L'\in \CHvass[k']$. Then $L\cup L'\in \CHvass[(k+k')]$ and $L\cap L'\in \CHvass[(k+k')]$.
 	Let $L\in\RHvass[k]$ and $L'\in \RHvass[k']$. Then $L\cap L'\in \RHvass[(k+k')]$. 
\end{restatable}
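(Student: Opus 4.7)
The plan is a product construction on $k+k'$ counters that simulates both given HD VASSs in parallel, combining their resolvers.

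For the two intersection claims, let $\?A$ with resolver $r$ recognise $L$ and $\?A'$ with resolver $r'$ recognise $L'$. I would build $\?B$ with state space $Q\times Q'$, initial state $(q_0,q_0')$, and a transition from $(q,q')$ on $a\in\Sigma$ to $(p,p')$ with combined effect $(e,e')\in\Z^{k+k'}$ whenever $(q,a,e,p)\in\delta$ and $(q',a,e',p')\in\delta'$. Declare $F\times F'$ accepting; under reachability, the zero-vector condition on the $(k+k')$-counter is then exactly the conjunction of the two original zero conditions, so the same product works uniformly for both semantics. The combined resolver $R$ maintains both histories independently and, on each input letter, applies $r$ and $r'$ to pick transitions in the two components. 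For $w\in L\cap L'$ both component runs are accepting, hence so is the product run; conversely, the projections of any accepting product run are accepting in $\?A$ and $\?A'$.

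For the coverability union, the bare product fails when $w\in L\setminus L'$: the second component may block mid-word and kill the product run. The fix is to first \emph{complete} each automaton by adding a fresh non-accepting sink $\bot$ (resp.~$\bot'$) with $\vec 0$-effect self-loops on every letter, and a non-deterministic $\vec 0$-effect transition $(q,a,\vec 0,\bot)$ from every $(q,a)\in Q\times\Sigma$ (symmetrically for $\?A'$). Since $\bot$ is a non-accepting trap, the recognised language is unchanged and the original $r$ remains a valid resolver (it never picks the sink transition). Now take the product with accepting set $\{(q,q')\mid q\in F\text{ or }q'\in F'\}$. The combined resolver $R$ follows $r$ on the first component whenever $r$'s chosen transition is valid (counters non-negative), otherwise routes to $\bot$; analogously on the second. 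If $w\in L$ then $r$ reads $w$ to completion and ends in $F$, so $R$'s first component reaches an accepting state, making the product run accepting; symmetrically for $w\in L'$. Any accepting product run projects to an accepting run of $\?A$ or $\?A'$.

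The main obstacle is the asymmetric nature of union: for intersections only words in $L\cap L'$ need to be read to completion and both resolvers cope automatically, whereas for union one must arrange that a component unable to read $w$ does not kill the product run. The sink-completion achieves this, but crucially only under coverability: in reachability semantics the stranded sink component would leave non-zero counters and violate the zero-vector condition, which is consistent with the paper's later observation that $\RHvass$ fails to be closed under union.
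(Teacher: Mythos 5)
Your proposal is correct and follows essentially the same route as the paper: a product on $k+k'$ counters whose resolver runs the two component resolvers in parallel, with completion by a non-accepting sink to handle the asymmetry of union under coverability (the paper likewise assumes completeness w.l.o.g.\ via non-accepting sink states). Your explicit remark on why the sink trick is incompatible with reachability acceptance matches the paper's subsequent observation that $\RHvass$ is not closed under union.
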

A na\"ive product of the two systems recognising $L$ and $L'$ does \emph{not} work for showing the union closure of $\RHvass$
because here, acceptance requires all counters to be zero
even for inputs that are only in one of the two languages (note the absence of $\eps$-transitions).
Indeed, $\RHvass$ are not closed under union, as witnessed by $\Lnotrunion\eqdef a^nb^n\cup a^nb^{2n}$ not being in $\RHvass$
(see \cref{app:closure}). \label{lang:lnotrunion}

Taking a direct product yields a $\Ahvass$ that may not be optimal in terms of the number of counters
and in general, increasing the dimension is not avoidable.
For instance,
the languages $\Lnotcunion\eqdef a^nb^{\le n}c^*\cup a^nb^*c^{\le n}$ and $\Lnotcint \eqdef a^nb^{\le n}c^*\cap a^nb^*c^{\le n}$ 
are not in \CHvass[1], while the individual component languages are.
Similarly,
the language $\Lnotrint \eqdef a^nb^nc^*\cap a^nb^*c^n=a^nb^nc^n$ 
witnesses non-closure of \RHvass[1] under intersection.
\label{lang:lnotrint}
\label{lang:lnotcunion}
\label{lang:lnotcint}

The theorems below summarise our findings regarding closure properties of history-deterministic classes. Full proofs are in \cref{app:closure}.
 
 \begin{restatable}{theorem}{thmcovcl}\label{thm:coverclosure}
	$\CHvass$ is closed under union, intersection and inverse homomorphisms.\\
	It is not closed under complementation, concatenation, homomorphisms, iteration, nor commutative closure.
\end{restatable}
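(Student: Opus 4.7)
The proof handles the three positive closures via constructive arguments and the five non-closures via explicit witness languages.

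Union and intersection closure follow from \cref{lem:unionclosed} using a product construction with a componentwise resolver. For inverse homomorphisms, given an HD VASS $\?A$ with resolver $r$ recognising $L \in \CHvass$ and $h : \Gamma^* \to \Sigma^*$, I construct $\?A'$ over $\Gamma$ in which each letter $\sigma \in \Gamma$ is read by simulating the run of $\?A$ on $h(\sigma)$, using auxiliary intermediate states to enforce every counter constraint along $h(\sigma)$. To remain $\eps$-free, the tail of each gadget is merged into the head of the next-letter gadget, with a dedicated flushing branch handling the final input symbol; the new resolver invokes $r$ on each simulated sub-step.

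The non-closures under concatenation, homomorphism and iteration all exploit the non-HDness of $\LnotHDVASS$ (\cref{lemma:lastblocknotHD}). Let $L_2 = \{a^n b^m : n \geq 1,\, m \leq n\} \in \CHvass[1]$. For concatenation, $\LnotHDVASS = \Sigma^* \cdot L_2$ with $\Sigma^* \in \CDvass$. For homomorphism, take $L = \Sigma^* \cdot \{c\} \cdot L_2 \in \CHvass$, where the fresh marker $c$ commits the resolver to the counting block: the erasing $h : c \mapsto \eps$, $a \mapsto a$, $b \mapsto b$ yields $h(L) = \LnotHDVASS$. For iteration, $L = \{c\} \cdot L_2 \in \CHvass$, but $L^* \notin \CHvass$ via a pumping argument analogous to \cref{lemma:lastblocknotHD}: Dickson's lemma applied to accepting runs with many consecutive $c$-separated counting blocks yields two blocks with non-decreasing cycles of matching support, whence pumping one and contracting the other produces an accepting run on a word outside $L^*$.

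For complementation, take $L = \Lanblen \in \CHvass[1]$. Intersecting its complement with the regular language $a^* b^*$ gives $\{a^n b^m : m > n\}$, which is not in $\CNvass$: for any candidate $k$-VASS with an accepting run on $a^j b^{j+1}$ for $j$ sufficiently large, Dickson's lemma applied to the $a^j$ prefix yields positions $i < j$ with equal state and component-wise comparable counters, and pumping the cycle once gives an accepting run on $a^{2j-i} b^{j+1}$ which has $|a| \geq |b|$, contradicting $m > n$. Since $\CNvass$ is closed under intersection with regular languages, $\overline{\Lanblen} \notin \CNvass \supseteq \CHvass$. For commutative closure, take $L = (ab)^* \in \CDvass$; its commutative closure is $\{w \in \{a,b\}^* : |w|_a = |w|_b\}$, which an analogous Dickson-style pumping on accepting runs of $a^j b^j$ shows is not in $\CNvass$.

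The main obstacle is the iteration non-closure, where the pumping argument must identify compatible cycles across multiple counting blocks separated by the marker $c$; the other non-closures follow from direct language decompositions or from a single Dickson-style pumping argument.
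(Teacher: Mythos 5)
Your proposal is correct and matches the paper's proof in all essentials: union and intersection via the product lemma, inverse homomorphisms by compressing each simulated block $h(\sigma)$ into a single transition and deferring its non-negative surplus into the next letter's transition (the paper makes this precise by charging the transition with the most negative prefix effect of the run on $h(\sigma)$ and storing the remainder in the finite control, which for coverability acceptance even makes your final ``flushing branch'' unnecessary), and the non-closures via $\LnotHDVASS$, $\Lanblen^{c}\cap a^*b^*$ and Dickson-style pumping. Your witnesses differ only cosmetically from the paper's (an erasing rather than letter-renaming homomorphism, a $c$-marked iteration language instead of $(a^nb^{\le n})^*$ itself, and $(ab)^*$ instead of $CC(\Lanblen)$ for commutative closure), and all of them work.
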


\vspace{-0.25cm}
\begin{restatable}{theorem}{thmreachcl}\label{thm:reachclosure}
$\RHvass$ is closed under intersection and inverse homomorphisms.\\
It is not closed under union, complementation, concatenation, homomorphisms, iteration, nor commutative closure.
\end{restatable}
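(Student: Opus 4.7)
The plan is to establish the positive closure properties by explicit constructions and to refute each remaining one by exhibiting a witness language, using pumping arguments in the style of \cref{lemma:lastblocknotHD,lemma:rhvsrn}.

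For \emph{intersection}, I invoke \cref{lem:unionint} directly: the product construction already sketched there transfers to reachability semantics because a final configuration of the product requires all counters to vanish in both components simultaneously, which matches the $\fstates\times\vec{0}$ criterion, and HDness lifts through a composite resolver that runs the two component resolvers in parallel. For \emph{inverse homomorphisms} $h\colon \Gamma^*\to\Sigma^*$, I would use the standard simulation construction: on input $a\in\Gamma$ the new VASS replays the original run on the fixed word $h(a)$, which requires only a bounded chain of intermediate states per letter; HD is preserved because the new resolver delegates to the old resolver one $h(a)$-block at a time.

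For \emph{non-closure under union} I would take $\Lnotrunion = a^nb^n\cup a^nb^{2n}$ from \cref{lang:lnotrunion}. Suppose for contradiction an HD VASS $\?A$ with resolver $r$ recognises $\Lnotrunion$. For large $n$ the resolver produces a run on $a^nb^{2n}$ that, by HDness, extends the run on $a^nb^n$, so after the first $n$ $b$'s the configuration must be accepting with zero counters; the next $n$ $b$'s then return to another zero-counter accepting configuration. Applying Dickson-style pumping inside this second $b$-block yields a cycle with non-negative effect, and a finer argument using that the total effect over the block is $\vec{0}$ with all intermediate values non-negative forces a zero-effect cycle; removing or inserting it produces an accepting run on some $a^nb^m$ with $n<m<2n$ or $m>2n$, a word outside $\Lnotrunion$, contradicting $\Lang{\?A}=\Lnotrunion$. \emph{Non-complementation} then follows by a clean De~Morgan argument: closure under intersection together with closure under complementation would imply closure under union, contradicting the above.

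For the remaining non-closure properties I exhibit witness languages in $\RHvass$ whose closure leaves the class, each refuted by a variation of the same pumping pattern. The canonical candidate is $L=\{a^nb^n\}\in\RDvass[1]\subseteq\RHvass$: its iteration $L^*=(a^nb^n)^*$ is not in $\RHvass$ because a resolver cannot decide, without a zero test, whether a configuration after a $b$-block is strictly between blocks or in the middle of one; its commutative closure $\{w:|w|_a=|w|_b\}$ requires counter sign-flip detection unavailable HD; an appropriate concatenation (e.g.\ with $\Sigma^*$) forces guessing a boundary; and an erasing homomorphism projecting a well-structured $\RHvass$ language onto one of the shape of $\Lnotrunion$ or $(a^nb^n)^*$ disposes of homomorphism closure. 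The main obstacles I anticipate are (i) extracting the zero-effect cycle needed for the $\Lnotrunion$ argument cleanly from the reachability constraint, which is tighter than coverability and rules out some straightforward pumpings; and (ii) picking iteration, concatenation, homomorphism and commutative-closure witnesses that are manifestly in $\RHvass$ via direct HD constructions, yet provably outside it after the closure operation, without collapsing into cases already handled.
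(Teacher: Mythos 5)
Your positive directions are broadly on track, but two of your non-closure witnesses and one construction have real problems. Most seriously, your commutative-closure witness fails: $CC(a^nb^n)=\{w:\#a(w)=\#b(w)\}$ \emph{is} in $\RHvass[2]$ --- take a single state, initial and accepting, with transitions that on $a$ either increment counter~1 or decrement counter~2, and on $b$ either increment counter~2 or decrement counter~1; the resolver greedily decrements whenever possible, maintaining $\min(c_1,c_2)=0$ and $c_1-c_2=\#a-\#b$, so the run ends at $\vec{0}$ exactly when the counts agree. The paper instead uses $CC(a^nb^{\ge n})$, whose intersection with $b^*a^*$ is $b^na^{\le n}\notin\RHvass$. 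Second, your union argument is both harder than necessary and incomplete at its key step: you need a \emph{zero}-effect cycle (inserting a non-decreasing cycle with strictly positive effect merely leaves the final counters nonzero, so it fails the reachability condition rather than accepting a bad word, and removing it may drive a later counter negative), and extracting one is exactly the obstacle you flag but do not resolve. The paper's argument needs no pumping at all: since $a^nb^n\in\Lnotrunion$ for every $n$, reachability acceptance forces the resolver's configuration after $a^nb^n$ to be $(q_n,\vec{0})$, so by pigeonhole some $n_1<n_2$ give the identical configuration $(q,\vec{0})$; as $b^{n_1}$ is accepted from there (witnessed by $a^{n_1}b^{2n_1}$), the word $a^{n_2}b^{n_2+n_1}\notin\Lnotrunion$ is also accepted. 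Your De~Morgan derivation of non-complementation is valid once union is repaired (the paper argues directly via $\Lanbgen$ and \cref{lemma:rhvsrn}), but as written it inherits the gap.

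Smaller issues: the classes $\RHvass$ are $\eps$-free by definition, so your inverse-homomorphism construction that ``replays $h(a)$ through a bounded chain of intermediate states'' would introduce $\eps$-transitions; the paper compresses each run on $h(a)$ into a single transition whose effect is the maximal negative prefix effect, delaying the remainder in the control state (and for reachability observes that the delayed part must vanish at acceptance). Your concatenation, homomorphism and iteration witnesses are left as sketches, and the iteration claim $(a^nb^n)^*\notin\RHvass$ is plausible but unproved; the paper's choices each reduce to already established lemmas: $a^*\cdot a^nb^n=\Lanblen$ and $h(c^*a^nb^n)=\Lanblen$ for $h(c)=a$ invoke \cref{lemma:rhvsrn}, while $(a^nb^na^*)^*$ falls to the same pigeonhole on zero configurations as the union case.
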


\section{Decision Problems}
\label{sec:decision}
\newcommand{\CorrectUpto}[1]{\mathsf{Correct}_{#1}}
\newcommand{\IncorrectAt}[1]{\mathsf{Incorrect}_{#1}}

In this section we consider decision problems related to history-determinism:
checking if a given \Anvass is history-deterministic,
HD definability (as well as regularity) of its recognised language,
and language inclusion between HD VASSs.

Prakash and Thejaswini \cite{PT2022} showed that in dimension $1$
(and for coverability semantics),
checking HDness and inclusion is decidable in \PSPACE\
by reduction to simulation preorder \cite{HLMT2016}.
This can be generalised slightly as follows.

\begin{theorem}
Language inclusion $\Lang{\?B}\subseteq\Lang{\?A}$
is decidable for any \Ahvass[1] $\?A$ and for any \Anvass $\?B$.
\end{theorem}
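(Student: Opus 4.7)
The approach, extending~\cite{PT2022} from $\?B$ a $1$-VASS to $\?B$ of arbitrary dimension, is to reduce inclusion to a stepwise simulation game and then decide that game using well-structured systems reasoning. The game is played on positions in $(Q_B\x\N^k)\x(Q_A\x\N)$: in each round, Spoiler picks an enabled transition of $\?B$ labelled by some $a\in\Sigma$, and Duplicator must respond with an enabled transition of $\?A$ on the same letter $a$ (losing if none is available). Duplicator wins a play if, along every prefix, whenever Spoiler's configuration is accepting in $\?B$ (state in $F_B$ and, under reachability, all counters zero) Duplicator's configuration is accepting in $\?A$.

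Next I would establish that $\Lang{\?B}\subseteq\Lang{\?A}$ iff Duplicator wins from the joint initial position. The forward direction uses the HD resolver of $\?A$ as Duplicator's strategy: whenever Spoiler reaches an accepting $\?B$-configuration, the word read so far is in $\Lang{\?B}\subseteq\Lang{\?A}$, so by definition of resolver Duplicator is at an accepting $\?A$-configuration. Conversely, any accepting $\?B$-run on a word $w\in\Lang{\?B}$ can be played by Spoiler, and Duplicator's winning response is an accepting $\?A$-run on $w$.

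Deciding the game is the main task. Observe that the set $W$ of Duplicator-winning positions is downward-closed in Spoiler's counters (smaller counters give Spoiler fewer moves, so only ease Duplicator's task) and upward-closed in Duplicator's counter (more counter value gives Duplicator more replies). By Dickson's lemma $W$ admits a finite representation, and the safety fixed-point can be computed by backward iteration whose termination follows from the well-quasi-order on positions. Under reachability acceptance, the ``all counters zero'' check is non-monotone, and would be handled by refining the arena to track which Spoiler-counters are currently zero, absorbing the equality test into the finite control before running the same closure-based fixed point. Then one simply checks whether the joint initial position lies in $W$.

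\textbf{Main obstacle.} General two-player VASS games and general VASS simulation are undecidable, so the argument must crucially exploit the asymmetry: Duplicator is confined to a single counter and Duplicator's objective is safety-shaped. The subtlest point is the reachability-acceptance case, where the non-monotone zero-test interacts poorly with the WQO argument; carefully refining the state space to preserve monotonicity while still finitely representing the winning region is the technical crux.
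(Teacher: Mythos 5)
Your reduction of inclusion to a letter-by-letter game (Spoiler in $B$, Duplicator in $A$, with the safety-shaped winning condition) and the equivalence with $\Lang{\?B}\subseteq\Lang{\?A}$ via the resolver are correct; this is essentially the simulation-preorder route that Prakash and Thejaswini take in dimension $1$. The paper instead sidesteps the game entirely: it invokes their Theorem~19 to turn the history-deterministic $1$-VASS $A$ into a language-equivalent \emph{deterministic} one-counter automaton with zero tests, complements that (possible because it is deterministic), and reduces inclusion to emptiness of the product of $\overline{A}$ with $B$, a VASS with one zero-testable counter, which is decidable by Reinhardt's theorem. That route buys decidability for arbitrary-dimensional $B$ without ever having to analyse a two-player game on an infinite arena.

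The genuine gap in your proposal is the step "decide the game". Your monotonicity observation makes $W$ closed under the ordering that is the \emph{reverse} of $\le$ on Spoiler's $k$ counters and $\le$ on Duplicator's counter; that product ordering is not a well-quasi-order (reversed $\N^k$ has infinite descending chains), so Dickson's lemma gives no finite basis for $W$. Concretely, $W$ has the shape $\{(v,n): n\ge f(v)\}$ for some monotone $f:\N^k\to\N\cup\{\infty\}$, and nothing so far bounds or finitely describes $f$; indeed, monotone safety games on VASS are undecidable in general (Abdulla et al.), and full simulation between Petri nets is undecidable, so well-structuredness alone cannot carry the argument. Even in the $1$-vs-$1$ case the decidability of the analogous game rests on a delicate geometric analysis of the winning region ("belts"), not on a backward WQO fixed point, and extending that analysis to $k$ Spoiler counters is precisely the missing technical content. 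Finally, for reachability acceptance your proposed fix does not work: a VASS cannot record in finite control whether its counters are zero (that is exactly a zero test), and with the zero-in-the-acceptance-condition the winning region is no longer downward-closed in Spoiler's counters nor upward-closed in Duplicator's, so even the monotonicity premise fails there.
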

\begin{proof}
By Theorem~19 in \cite{PT2022}, for any \Ahvass[1], one can effectively construct
a language equivalent deterministic one-counter automaton (DOCA; a \Avass[1] with zero-testing transitions).
DOCA can be complemented \cite{VP1975} and so the inclusion question is equivalent to the 
emptiness (reachability) of $\overline{\?A}\x\?B$, a VASS with one zero-testable counter, which
is decidable \cite{R2008}. Note this result is independent of the number of counters of $B$.
\end{proof}

We continue to show that in higher dimensions, these questions are undecidable.
Throughout this section, when we show undecidability for $\epsilon$-free VASS, the result naturally also applies for superclass with silent transitions.
Our constructions proving this are similar, yet require subtle differences,
and are all based on weakly simulating two-counter machines \cite{M1967}.
Let us recall these in a suitable syntax first.

\begin{definition}
A two-counter Minsky machine (2CM) $M=(\states, q_0,q_h, \transitions)$
consists of a finite set of states $\states$, including a distinguished starting and final state $q_0, q_h$, respectively,
as well as a finite set of transitions $\transitions \subseteq \states\x\CMOPS\x\states$, 
where $\CMOPS=\{\inc_1,\inc_2,\dec_1,\dec_2,\ztest_1,\ztest_2\}$
are the operations on the counters\footnote{Readers may be more familiar with an instruction of the form $\textsf{if } C_i = 0 \textsf{ goto } q_\ell \textsf{ else goto } q_k $, this can be simulated by a $\ztest_i$ to $q_\ell$ and a decrement followed by an increment to $q_k$.}.

A configuration of $M$ is an element of $\states\x\N^2$, comprising the current state and the value of the two counters. 
For every state $q$ either:
\begin{enumerate} %
    \item There is only one transition of the form $(q, \inc_i, q')$.
        This allows to move from state $q$ to $q'$, increment counter $i$ by one and leaves the other counter untouched; or
    \item There are exactly two transitions from $q$, of the form $(q, \ztest_i,q')$ and $(q, \dec_i,q'')$.
        The former allows to move to $q'$ without changing the counters, but only if counter $i$ has value $0$.
        The latter allows to move from $q$ to $q''$ and decrease counter $i$, and leaves the other counter unchanged.
\end{enumerate}
\end{definition}

Notice that from any configuration there is exactly one possible successor configuration.
We can therefore speak of \emph{the} run of $M$, and its sequence of counter operations, from the initial configuration $(q_0,0,0)$.
We say that $M$ \emph{terminates} if its run visits the final state $q_h$.
W.l.o.g., we can assume that both counters have value $0$ whenever $M$ terminates.

Deciding whether a given 2CM terminates is undecidable \cite{M1967}.
An easy consequence, and the basis for our construction for regularity,
is the undecidability of checking finiteness of the reachability set for a given 2CM.

\begin{restatable}{lemma}{cmfinite}
    \label{lem:2cm-finiteness}
    It is undecidable to check, for given 2CM $M$, 
    if its run visits infinitely many different configurations.
\end{restatable}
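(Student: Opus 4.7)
The plan is to reduce from the halting problem for $2$-counter Minsky machines, which is classically undecidable \cite{M1967}. I would not attempt a syntactic construction transforming $M$ into another 2CM with a different infinite-reachability behaviour (the usual obstacle there is that a non-halting $M$ may on its own visit infinitely many configurations); instead, I would use a Turing-reduction that combines a hypothetical decider for infinite visitation together with a bounded simulation.

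Suppose for contradiction that we had an algorithm $\mathcal{A}$ that decides, on input a 2CM $M$, whether its (deterministic) run visits infinitely many distinct configurations. Given an arbitrary 2CM $M$, I would first invoke $\mathcal{A}(M)$. If $\mathcal{A}(M)$ answers ``yes'', then $M$ cannot halt, because a halting run is finite and therefore visits only finitely many configurations. If $\mathcal{A}(M)$ answers ``no'', then the set of configurations along the run is finite (though its size is a priori unknown), and I would simulate $M$ step by step from $(q_0,0,0)$, maintaining the set $S$ of configurations seen so far, terminating as soon as one of two events occurs: either the halting state $q_h$ is visited, in which case $M$ halts, or the current configuration already belongs to $S$, in which case the run has entered a strict cycle and $q_h$ will never be visited, so $M$ does not halt. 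Since the reachable set is finite, one of these two events is guaranteed to occur after finitely many steps, so the procedure terminates.

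The key conceptual ingredient — rather than obstacle — is that determinism of 2CMs is what makes the ``finite reach set'' branch algorithmic: revisiting a configuration certifies non-halting, because from a repeated configuration the future is uniquely determined and must repeat the same cycle forever. This observation would fail for nondeterministic devices, which is precisely why the undecidability of infinite reachability for 2CMs is useful as a separate building block for the later constructions in \cref{sec:decision}. Everything else is bookkeeping: configurations are explicit triples in $Q \times \N^2$, the set $S$ can be stored exactly, and each simulation step is effective. The resulting procedure decides 2CM halting, contradicting Minsky's theorem, so no such decider $\mathcal{A}$ exists.
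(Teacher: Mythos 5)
Your proposal is correct and is essentially the same argument as the paper's: a Turing reduction from 2CM halting in which a ``yes'' answer certifies non-halting, and a ``no'' answer licenses a terminating simulation that stops either at $q_h$ or at the first repeated configuration (which, by determinism, certifies non-halting). Nothing further is needed.
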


\subsection{Checking \hdness and Inclusion}
\label{ssec:hdness-sys}
We focus on the questions of whether a given $\Avass$ is history-deterministic,
and whether language inclusion holds for two languages given by $\Ahvass$.
For languages of finite words these two decision problems are intrinsically linked (see \cref{app:lang-max}).

\begin{figure}[t]
    \centering
    \includegraphics[width=0.8\textwidth]{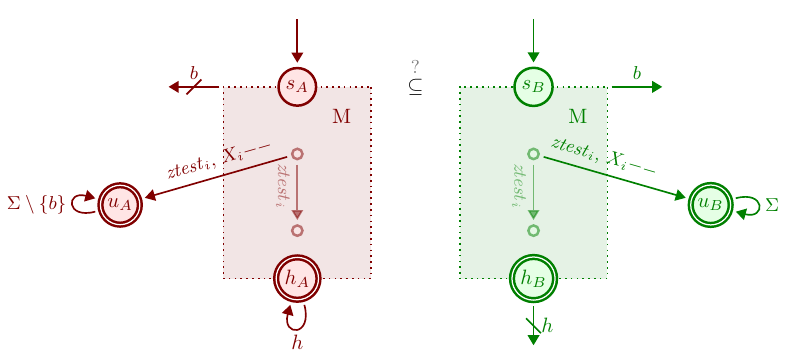}
    \caption{The $2$-VASSs $A$ (in red) and $B$ (in green) both include a copy of, and weakly simulate, a given 2CM $M$.
        For any zero-testing operation $\ztest_i$ in $M$ both can go to a sink state
        if counter $i$ is in fact non-zero, reading the letter $\ztest_i$ and decreasing the VASS counter $i$, as indicated by the effect vector $\decop{i}$.
        The extra letter $b$ ensures that $\Lang{B}\not\subseteq\Lang{A}$;
        Only $A$ can accept words that consist of valid sequences of 2CM operations and that end in the letter $h$.
    }
    \label{fig:2-hd-undec}
\end{figure}

\begin{lemma}\label{lem:2-2-vass}    
For a given 2CM $M$ one can construct two history-deterministic $2$-VASSs
with initial states $s_A$ and $s_B$, respectively, so that
$\Lang{s_A,0}\subseteq\Lang{s_B,0}$ if, and only if,
the unique valid run of $M$ never reaches a halting state.
\end{lemma}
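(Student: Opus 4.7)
My plan is to equip both $A$ and $B$ with a common skeleton that weakly simulates $M$ on two counters, and let them differ only in their acceptance gadgets. Both VASSs have a state $\hat q$ per $M$-state $q$ plus a shared sink state. At a state $\hat q$ where $M$ has the unique transition $(q,\inc_i,q')$, the VASS has the single $\inc_i$-transition to $\hat{q'}$ incrementing counter $i$. At a branching state with $M$-transitions $(q,\ztest_i,q')$ and $(q,\dec_i,q'')$, there are three outgoing edges: a $\dec_i$-transition to $\hat{q''}$ decrementing counter $i$, a ``simulation'' $\ztest_i$-transition to $\hat{q'}$ with zero effect, and a ``cheat'' $\ztest_i$-transition to the shared sink decrementing counter $i$ (firable only when counter $i$ is positive). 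The sink is a final state with self-loops on every alphabet letter, accepting $\Sigma^*$ (in reachability semantics, an auxiliary draining gadget first empties both counters). For $A$, $\hat{q_h}$ additionally has an $h$-transition to a final state; for $B$ this $h$-transition is omitted, and instead a $b$-transition from $\hat{q_0}$ leads to a trivially accepting subautomaton, ensuring $b\in\Lang{s_B}\setminus\Lang{s_A}$.

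Next, I would verify history-determinism by exhibiting the following resolver: at every branching state $\hat q$ on letter $\ztest_i$, take the cheat edge whenever counter $i$ is positive, and take the simulation edge otherwise. This is language-maximal: when counter $i$ is zero only the simulation edge is firable, while when counter $i$ is positive the cheat edge reaches a sink whose language is $\Sigma^*$, which subsumes anything accessible by continuing the now-miscalibrated simulation. Since there is no other source of non-determinism, this resolver produces an accepting run for every accepted input word.

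For the reduction, suppose first that $M$ halts via the unique sequence $\sigma_1\ldots\sigma_n$. Then $\sigma_1\ldots\sigma_n\,h\in\Lang{s_A}$ by faithfully simulating the run (all zero-tests meet counter $0$, so no cheat is enabled) and firing the $h$-transition. The same word is not in $\Lang{s_B}$ because $B$ lacks the $h$-transition, the sink is unreachable along a faithful simulation, and the $b$-gadget is reachable only via the distinct letter $b$; hence $\Lang{s_A}\not\subseteq\Lang{s_B}$. Conversely, if $M$ does not halt then $\hat{q_h}$ is unreachable, so every accepting run in $A$ must eventually take some cheat edge to the sink. Because the identical cheat edges and sink structure are present in $B$, any accepting run in $A$ translates to an accepting run in $B$, yielding $\Lang{s_A}\subseteq\Lang{s_B}$.

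The delicate point I expect to be the main obstacle is justifying language-maximality of the resolver's eager cheating: one must rule out the possibility that committing to cheat at the first opportunity forfeits a later acceptance via $\hat{q_h}\,h$. This is where the invariant ``a valid $M$-run always enters $\ztest_i$ with counter $i=0$'' is crucial, since it means the cheat branch is never enabled along the (unique) halting run, so eager cheating never blocks that alternative acceptance path. A secondary sanity check is that the $b$-gadget in $B$ does not accidentally accept any word of the form $u\,h$, which is immediate since $b$ is distinct from all 2CM operation letters and from $h$.
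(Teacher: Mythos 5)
Your construction is essentially the paper's (a weak simulation of $M$ with ``cheat'' edges on zero-tests into a universal sink, plus the asymmetric $h$- and $b$-gadgets), and your forward direction and history-determinism argument are fine. The gap is in the converse direction. You claim that if $M$ does not halt then $\hat{q_h}$ is unreachable in $A$, so that every accepting run of $A$ must take a cheat edge. This is false: the simulation edge for $\ztest_i$ has zero effect and is firable regardless of the value of counter $i$ (a VASS cannot test for zero --- that is precisely why the simulation is only \emph{weak}), so a run of $A$ may read $\ztest_i$ while counter $i$ is positive, take the simulation edge, stay inside the copy of $M$, and still reach $\hat{q_h}$ along a path of $M$'s state graph that the faithful run never follows. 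Such a run accepts a word of the form $w\,h$ without ever taking a cheat edge, and it does not ``translate'' to an accepting run of $B$, since $B$ lacks the $h$-transition at $\hat{q_h}$.

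The correct argument is at the level of words rather than runs: if $M$ does not halt, then every word $w$ accepted by $A$ must contain a position where some $\ztest_i$ is read while the (deterministically evolving) counter $i$ is positive. Otherwise the $\CMOPS$-projection of $w$ coincides with the faithful run of $M$, which never reaches $q_h$ and never enables a cheat edge, so no accepting state of $A$ is reachable on $w$ at all. At the first such ``unfaithful'' zero-test, $B$ can take its cheat edge into the universal sink and accept the remainder of $w$, including any trailing $h$. This word-level case analysis --- covering both the runs that cheat into $A$'s sink and the runs that reach $\hat{q_h}$ via an unfaithful zero-test --- is exactly what the paper's proof does and what your write-up is missing; with it, your construction and the rest of your argument go through.
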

\begin{proof}
    Suppose we are given 2CM $M$
    with designated initial and halting states $s$ and $h$, respectively,
    and let $\CMOPS$ denote the set of counter operations.
    W.l.o.g., there is exactly one valid sequence of counter operations that is either infinite or finite.
    We define two $\Avass[2]$s $A$ and $B$
    over the alphabet $\Sigma=\CMOPS\uplus\{b,h\}$.
    These are just copies of, and just weakly simulate the machine $M$:
    For every state $q$ of $M$, there are states $q_A$ and $q_B$;
    For every transition $q\step{\op}q'$ of $M$,
    there are corresponding edges
    $q_A\step{\op}q'_A$ and $q_B\step{\op}q'_B$
    that read the letter $\op$ and manipulates the counter accordingly:
    if $\op=\inc_i$ (or $\dec_i$) then counter $i$ is incremented (or decremented, respectively).
    If $\op=\ztest_i$ then counter $i$ remains as is.
    The only accepting states so far are $h_A$ and $h_B$, corresponding to the designated halting state of $M$.

    Additionally, for every zero-testing transition $q\step{\ztest_i}q'$ in $M$,
    both $A$ and $B$ have a transition from state $q$ that decreases counter $i$
    and goes to a new, accepting, sink state $u$
    with language $\supseteq (\CMOPS\cup\{h\})^*$.
    This way, both systems will accept any word that prescribes a run of $M$
    that contains a ``counter cheat'', meaning that the word contains operation $\ztest_i$
    but the run of $M$ so far ends in a configuration where counter $i$ is not zero.

    We now modify the systems $A$ and $B$ so that they differ in two ways:
    \begin{enumerate}
        \item 
            the halting state $h_A$ 
            of $A$ admits a $h$-labelled step (to itself)
            but $s_B$ does not.
        \item All states in $B$ have $b$-labelled steps (to the accepting sink $u_B$) but none of $A$s states do.
    \end{enumerate}
    See \cref{fig:2-hd-undec} for a depiction of the constructed $2$-VASS.

    \medskip
    Notice that $\Lang{B}\not\subseteq \Lang{A}$ by design, because no word containing the letter $b$ can be accepted by $A$.
    Notice that both $A$ and $B$ are indeed history-deterministic:
    the only choices to be resolved are upon reading a zero-testing letter $\ztest_i$ from a configuration where the corresponding counter $i$ is not zero.
    In any such case, moving to the sink is language maximal.

    It remains to argue that $\Lang{s_A} \not\subseteq \Lang{s_B}$ if, and only if, $M$ has a finite run from an initial configuration to its final state.
    Indeed, if $M$ terminates via a sequence $\rho=e_0e_1,\ldots e_k$, then $\Lang{A}$ contains the word $\rho \cdot h$.
    Since this run does not contain ``cheats'' nor letters $b$, the system $B$
    cannot possibly reach the winning sink $u_B$ and therefore not accept.
    Conversely, if $M$ does not terminate,
    then any word $\rho\in \CMOPS^*\cdot h$ accepted by $A$ must prescribe a
    run of $M$ that contains a cheat. Say $\rho=\rho_1 \cdot \ztest_i \cdot \rho_1 \cdot h$. But then, $B$ will be able to reach the sink $u_B$ after reading the prefix $\rho_1\cdot \ztest_i$ and thus accept.
\end{proof}
The construction in the previous lemma works both in coverability and reachability semantics (note that we assume that a 2CM terminates with counters at $0$). The next two theorems are direct consequences and again hold for coverability and reachability semantics.
\begin{theorem}
    \label{thm:2-HD-inlusion}
    Checking language inclusion is undecidable for $2$-HD VASSs.
\end{theorem}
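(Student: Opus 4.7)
The plan is to derive the theorem as an immediate corollary of \cref{lem:2-2-vass}. That lemma constructs, from any 2CM $M$, two HD $2$-VASSs whose recognised languages satisfy $\Lang{s_A,0} \subseteq \Lang{s_B,0}$ if and only if $M$ does not reach its halting state. Since 2CM halting is undecidable by Minsky's classical theorem \cite{M1967}, this is a many-one reduction from the complement of halting to inclusion checking for HD $2$-VASSs, proving undecidability. A decider for HD $2$-VASS inclusion would directly yield one for 2CM halting, a contradiction.

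I would additionally emphasise that, as noted in the paragraph preceding the theorem statement, the construction of \cref{lem:2-2-vass} is uniform across coverability and reachability semantics. Under the standing assumption that a halting 2CM terminates with both counters at zero, the accepting state $h_A$ is reached in $A$ with both VASS counters at zero, so reachability acceptance coincides with coverability acceptance on the witnessing word. The accepting sink $u$ can similarly be arranged so that every relevant suffix over $\CMOPS \cup \{h\}$ is accepted under either condition (e.g.\ by providing explicit decrementing transitions that reset counters before acceptance). Hence the same reduction establishes undecidability for both acceptance modes simultaneously.

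The main technical obstacle has already been overcome inside \cref{lem:2-2-vass}, namely verifying that the weak simulation of zero-tests by non-deterministic decrements preserves history-determinism in both $A$ and $B$. The resolver strategy of always preferring the move to the accepting sink whenever a genuine ``counter cheat'' is available is language-maximal on either side, and hence a valid resolver. Given this, the present theorem needs no further argument beyond invoking undecidability of the 2CM halting problem.
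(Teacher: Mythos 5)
Your proposal is correct and matches the paper exactly: the theorem is stated as a direct consequence of \cref{lem:2-2-vass} together with the undecidability of 2CM halting, with the same remark that the construction (under the convention that a halting 2CM terminates with counters zero) works for both coverability and reachability acceptance. No further argument is needed.
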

\begin{theorem}
    \label{thm:2-HDness}
    It is undecidable to check if a given $2$-VASS is history-deterministic.
\end{theorem}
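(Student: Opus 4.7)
The plan is to reduce from the (undecidable) non-halting problem for two-counter Minsky machines, by gluing together the two systems $A, B$ produced by \cref{lem:2-2-vass} into a single VASS whose HDness encodes the inclusion $\Lang{s_A,0}\subseteq\Lang{s_B,0}$. Recall from that lemma that $A$ and $B$ are both HD, and that we always have $\Lang{s_B,0}\not\subseteq\Lang{s_A,0}$ (the letter $b$ witnesses this, being readable in $B$ but not in $A$), whereas the converse inclusion holds if and only if $M$ does not halt.

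Given $M$ and the resulting $A,B$, construct a $2$-VASS $C$ as the disjoint union of $A$ and $B$ augmented with a fresh initial state $s_C$, a fresh alphabet letter $\sharp$, and two transitions $(s_C,\sharp,\vec{0},s_A)$ and $(s_C,\sharp,\vec{0},s_B)$; no other edge touches $s_C$ and final states are inherited. Then $\Lang{C}=\sharp\cdot(\Lang{s_A,0}\cup\Lang{s_B,0})$, and the only genuinely non-deterministic choice in $C$ not already present in $A$ or $B$ is the initial $\sharp$-branch from $s_C$. I will then show that $C$ is HD if and only if $\Lang{s_A,0}\subseteq\Lang{s_B,0}$, which by \cref{lem:2-2-vass} is equivalent to $M$ not halting, giving undecidability. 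The same construction works under both coverability and reachability semantics since the new transitions have zero effect.

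For the easy direction, if $\Lang{s_A,0}\subseteq\Lang{s_B,0}$ then $\Lang{C}=\sharp\cdot\Lang{s_B,0}$, and a resolver for $C$ simply takes the $\sharp$-edge to $s_B$ (which is language-maximal) and from then on plays the HD resolver of $B$. For the hard direction, suppose $\Lang{s_A,0}\not\subseteq\Lang{s_B,0}$ and fix $w\in \Lang{s_A,0}\setminus\Lang{s_B,0}$ and $w'\in\Lang{s_B,0}\setminus\Lang{s_A,0}$ (e.g.\ $w'=b$). Any resolver for $C$ must commit, on input letter $\sharp$ from the unique initial history $(s_C,\vec 0)$, to exactly one of the two outgoing transitions (positionality of finite-word resolvers, noted in the introduction, makes this immediate; alternatively, the resolver is a function of history and next letter, and the history here is empty). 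If it commits to the $s_B$-branch, it produces a non-accepting run on $\sharp w\in\Lang{C}$; if it commits to the $s_A$-branch, it fails on $\sharp w'\in\Lang{C}$. Either way $C$ is not HD.

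The main obstacle is conceptually small but must be stated carefully: one has to argue that the apparent non-determinism at $s_C$ really cannot be deferred or dodged by a clever resolver. This is what the single-visit-to-$s_C$ observation and positionality handle; combining the two witnesses $w$ and $w'$ coming from opposite directions of the inclusion forces a contradiction no matter which way the resolver commits. Everything else is bookkeeping: $C$ has dimension $2$ by construction, $A$ and $B$ are obtained effectively from $M$, and non-halting of 2CMs is undecidable, so HDness of a given $2$-VASS is undecidable, as required.
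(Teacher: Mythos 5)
Your proposal is correct and follows essentially the same route as the paper: glue $A$ and $B$ from \cref{lem:2-2-vass} under a fresh initial state branching to both on a single letter, and observe via language-maximality of resolvers (\cref{prop:hd=language-maximal}) that the result is HD iff $\Lang{s_A,0}\subseteq\Lang{s_B,0}$, i.e.\ iff $M$ does not halt. Your explicit two-witness argument at the branching state is just an unpacking of that proposition, and the choice of a fresh letter $\sharp$ versus reusing $b$ is immaterial.
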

\begin{proof}
    By reduction from 2CM termination:
    Construct the two systems $A$ and $B$ as given by \cref{lem:2-2-vass}
    and add one new initial state $s$ that, upon reading some letter $b$
    can move to the initial state $s_A$ of $A$ or $s_B$ of $B$.
    By \cref{prop:hd=language-maximal}, the so-constructed system is HD iff
    $\Lang{s_A}\subseteq\Lang{s_B}$, which is true iff $M$ does not terminate.
\end{proof}

\subsection{Checking \hdness of VASS Languages}
\label{ssec:hdness-lang}
We turn to showing undecidability of \emph{language} history-determinism, i.e., the question if for a given VASS there exists an equivalent history-deterministic VASS.
We start with the more interesting and involved case, for the coverability semantics
(\cref{theorem:languagehdnessundecidable})
and present an easier construction for reachability (\cref{thm:languagehdness-reach}) afterwards.

We give a proof by reduction from the 2CM halting problem,
combining the constructions to show the non-HDness of $\LnotHDVASS=(a,b)^*a^nb^{\le n}$, (\cref{lemma:lastblocknotHD})
and the proof of \cite{JM1995} that checking regularity for \Anvass languages is undecidable.

\begin{restatable}{theorem}{thmlanguagehdnesscover}
    \label{theorem:languagehdnessundecidable}
    \label{thm:languagehdness-cover}
    It is undecidable to check if $\Lang{\?A} \in \CHvass$ holds for a given \Anvass $\?A$.
\end{restatable}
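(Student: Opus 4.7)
The plan is to reduce from the $2$-counter machine (2CM) halting problem, combining the regularity-undecidability technique of~\cite{JM1995} with the pumping argument behind \cref{lemma:lastblocknotHD}. Given a 2CM $M$, I will construct an \Anvass $\?A_M$ over alphabet $\Sigma = \CMOPS \cup \{a,b\}$ that weakly simulates $M$ on the $\CMOPS$-letters while also tracking an $\LnotHDVASS$-style last-block constraint on the $a,b$-letters. Concretely, $\?A_M$ will accept words of the form
\[
    \tau_1 \cdot a^{n_1}b^{m_1} \cdot \tau_2 \cdot a^{n_2}b^{m_2} \cdots \tau_k \cdot a^{n_k}b^{m_k},
\]
where each $\tau_i\in\CMOPS^+$, the concatenation $\tau_1\cdots\tau_k$ is an honest (no-cheat) non-halting prefix of $M$'s unique run, and $m_k\le n_k$. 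Following the idea of \cref{lem:2-2-vass}, any attempt to read $\ztest_i$ while counter $i$ is non-zero may non-deterministically trigger a transition to an accepting sink that accepts any continuation, so ``cheating'' runs contribute only an HD-recognisable side-language.

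If $M$ halts at step $T$, honest non-halting prefixes have length at most $T$, so only finitely many splittings $(\tau_1,\ldots,\tau_k)$ occur. The ``honest'' portion of $\Lang{\?A_M}$ is then a finite union of the $\CDvass$-languages $\tau_1 a^*b^* \tau_2 \cdots \tau_k a^n b^{\le n}$. Together with the HD cheat-escape side-language, \cref{lem:unionclosed} yields $\Lang{\?A_M}\in\CHvass$. If $M$ does not halt, honest-prefix words in $\Lang{\?A_M}$ have arbitrarily many $a^*b^*$-blocks, and I would adapt the pumping argument of \cref{lemma:lastblocknotHD}: given any putative $k$-dimensional HD resolver, pick a sufficiently long honest-prefix word with $n_i=m_i$ in each block; Dickson's lemma forces a non-decreasing cycle within each $a$-block, and pigeonhole on the supports of these cycles produces indices $\ell<\ell'$ whose cycles share support; repeating the $\ell$-th cycle and deleting the $\ell'$-th (as illustrated in \cref{fig:notHDpumping}) yields an accepting run on a word whose (now last) $\ell'$-th block has more $b$'s than $a$'s, violating $m_{\ell'}\le n_{\ell'}$. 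Because the $\tau$-prefix of the modified word is still an honest prefix of $M$, the cheat-escape cannot salvage it, giving the required contradiction.

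The hard part will be engineering the cheat-escape mechanism so that it both (i) contributes only an HD sub-language in either case, and (ii) in the non-halting case does not accidentally cover the pumped counterexample; the latter depends on the fact that the pumped word's $\tau$-prefix is honest and therefore triggers no cheat. Lifting the pumping argument itself to the interleaved structure should be essentially routine, provided the $a$-blocks are chosen long enough that the non-decreasing cycles found by Dickson's lemma lie strictly inside a single $a$-block and do not cross any $\tau$-segment. Combined, these give $\Lang{\?A_M}\in\CHvass \iff M$ halts, and the claimed undecidability follows.
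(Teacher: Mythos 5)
Your proposal is correct and follows essentially the same route as the paper's proof: weakly simulate the 2CM interleaved with $a^*b^*$ blocks, send incorrect zero-tests to a universal sink (exactly the mechanism of \cref{lem:2-2-vass}, which resolves the "cheat-escape" concern you raise), observe that halting makes the language a finite union of HD languages closed under union by \cref{lem:unionclosed}, and in the non-halting case rerun the Dickson's-lemma/support-pigeonhole pumping of \cref{lemma:lastblocknotHD} on the interleaved words. The paper's construction differs only in inessential bookkeeping (one $\CMOPS$-letter between consecutive blocks, an explicit third counter for the final $a^nb^{\le n}$ check).
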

\begin{proof}

    \newcommand{\sinkstate}{\mathit{sink}}
    By reduction from the $2$CM halting problem.
    For a given $2$CM $M$ with states $\states[M]$ and counter operations $\CMOPS = \{\inc_1,\inc_2,\dec_1,\dec_2,\ztest_1,\ztest_2\}$
    we construct a \Avass[3] $\?A=\vasstuple$
    so that $\Lang{\?A}$ is history-deterministic iff the faithful run of $M$ is finite.

    We refer to the three counters as $X_1,X_2,X_3$
    and write $\decop{i}$ and $\incop{i}$
    for the effects of (VASS) transitions that decrement/increment counter $i$ only.
    
    \medskip
    \textbf{The construction.}
    $\?A$ uses the alphabet $\Sigma = \CMOPS \cup \{a,b\}$, consisting of counter operations of $M$ and two fresh symbols.
    The control states of $\?A$ mimic those of $M$, except that in between any simulated step of $M$,
    $\?A$ can read a word in $a^+b^+$:
    For every state $q\in \states[M]$ we introduce states $q_{in}$, $q_{out}$ and $q_{step}$.
    In addition, we add three other states $\sinkstate,r_1,r_2$.
    We make $\sinkstate$ universal by adding self-loops $(s,a,\vec{0},s)$ for every letter $a\in\Sigma$.
    First we consider the simulation of $M$.

    For every step $q\step{\op}p$ of $M$,
    $\?A$ has a transition
    $t=(q_{out},\op,e,p_{in})$
    from $q_{out}$ to $p_{in}$
    that reads the letter $\tfunction{t}=\op$ and manipulates the counter accordingly:
    if $\op=\inc_i$ 
    then $e=\incop{i}$;
    if $\op=\dec_i$ then $e=\decop{i}$;
    if $\op=\ztest_i$ then $e=\vec{0}$.
    In addition, for zero-testing steps 
    $q\step\ztest_i p$, $\?A$
    in $M$, $\?A$ contains
    a decreasing transition 
    $t=(q_{out},\ztest_i,\decop{i},\sinkstate)$ 
    to the universal sink state.
    From a state $q_{in}$. There are two possible continuations:%
    \begin{enumerate}
        \item Reading a word in $a^+b^+$ and moving to $q_{out}$,
            via transitions $q_{in} \step{a,\vec{0}} q_{step}$, $q_{step}\step{a,\vec{0}} q_{step}$, $q_{step}\step{b,\vec{0}} q_{out}$ and $q_{out}\step{b,\vec{0}} q_{out}$. 
        \item 
            Reading a word in $a^{n}b^{\le n}$ and stopping.
            For this, there are transitions
            $q_{in}\step{a, \incop{3}} r_1$,
            $r_{1}\step{a,\incop{3}} r_1$,
            $r_{1}\step{b,\decop{3}} r_2$ and
            $r_{2}\step{b,\decop{3}} r_2$. 
    \end{enumerate}
    The accepting states of $\?A$ are $F=\{r_2,\sinkstate\}$.
    Its initial state is $\initialstate = q_{out}$, where $q\in\states[M]$ is the initial state of $M$.

    \medskip
    \textbf{The recognised language}
    of the constructed \Avass[3] $\?A$ contains sequences of instructions of $M$ interspersed with blocks of the form $a^+b^+$.
    Let's call a sequence $\op_1\op_2\dots\op_k\in \CMOPS^*$ of operations in $M$ \emph{faithful} if for all $i\le k$, $\op_i$ is the $i$th instruction in the run of $M$ from its initial configuration $(q,0,0)$. 
    Clearly, for any $k$ less or equal to the length of the run of $M$, there is a unique faithful sequence 
    $\rho_k$ %
    of length $k$.
    Define $\CorrectUpto{k} \eqdef \op_1(a^+b^+)\op_2(a^+b^+)\op_3 \dots (a^+b^+)\op_k$
    where $\op_1\op_2\dots\op_k = \rho_k$.
    Let
    $\IncorrectAt{k}\subseteq \Sigma^*$ contain exactly all words $w\op\in\Sigma^*\setminus \CorrectUpto{k}$
    where $w\in\CorrectUpto{k-1}$ and $\op\in\{\ztest_1,\ztest_2\}$. That is, words whose projection into the operations of $M$ is faithful
    up to step $k-1$ but that contain an incorrect zero-test at step $k$.

    Observe that if the faithful sequence of length $k$ takes $M$ to $(q,C_1,C_2)$
    then $\?A$ can read any word in $\CorrectUpto{k}$
    and every run on such a word leads to the configuration $(q_{in},C_1,C_2,0)$.
    Such a run of $\?A$ can be extended in two ways to reach an accepting state. Either by reading a word in $a^nb^{\le n}$ to reach $r_2$,
    or by continuing on the run of $M$ and eventually erroneously reading a  $\ztest_i$ to reach $\sinkstate$.
    We can therefore write the language of $\?A$ as
    \[
        \Lang{\?A}
        \quad=\quad 
        \bigcup_{k\ge 0} \CorrectUpto{k} \cdot (a^n b^{\le n})
        \quad
        \cup
        \quad
        \bigcup_{k\ge 0} \IncorrectAt{k}\cdot\Sigma^*
    \]
    \textbf{HDness.}
    We show that if $M$ terminates, meaning its run has some length $k\in\N$, then $\Lang{\?A}$ is history-deterministic.
    Observe that for every $0\le i\le k$, both languages $\CorrectUpto{i}$
    and $\IncorrectAt{i}$ are regular. 
    We can concatenate a DFA recognising the former with a \Ahvass[1] for $a^nb^{\le n}$
    to construct an \Ahvass[1] recognising $\CorrectUpto{i} \cdot (a^n b^{\le n})$.
    Observe that $\IncorrectAt{i}$, $i> k+1$ is empty.
    Now, 
    $\Lang{\?A}$ is the finite union of $k$ many $\Ahvass[1]$ languages (and a regular language)
    and therefore %
    recognisable by a $k$-dimensional \Ahvass.

    It remains to show that if the run of $M$ is infinite, then $\Lang{\?A}$ is not in $\CHvass[k]$, for any $k$. 
     Our proof mirrors the proof of \cref{lemma:lastblocknotHD}, except that we interleave
    $\{a,b\}$-blocks with the faithful operations of $M$.
    Suppose towards a contradiction that there exists a \Ahvass[k] $\?B$ with states $\states[\?B]$
    and let $\rho = \op_1\op_2,\dots \in \CMOPS^\omega$
    denote the infinite run of $M$. That is, every length-$i$ prefix $\rho_i$ is faithful.
    Consider a sequence $(w_n)_{n\ge0}$ of words in $\Lang{B}$ such that $w_0=\eps$ and otherwise %
    $w_\ell = w_{\ell-1} \op_{\ell} a^{m_{\ell}}b^{m_{\ell}}$ with $m_\ell$ large enough so that 
    the resolved run on $w_\ell$ contains a non-decreasing cycle while reading the last $a$-block.
    Say,
    \[
    (\initialstate,\vec{0})
    \step{w_{\ell-1}\op_\ell a^{r_{\ell,1}}}
    (q_\ell, u_\ell)
    \step{a^{r_{\ell,2}}}
    (q_\ell, v_\ell)
    \]
    with $u_\ell\le v_\ell$.
    This is well-defined by Dickson's Lemma.

    Setting $n = \abs{\states[\?B]}2^{k} + 1$
    is sufficiently high so that there must be $\ell<\ell'$
    with $q_\ell = q_{\ell'}$ and $\support{u_\ell- v_\ell} = \support{u_{\ell'}- v_{\ell'}}$.
    Take $R$ be such that $R(u_\ell - v_\ell) \ge (u_{\ell'} -v_{\ell'})$
    and let $u$ be the word such that $w_{\ell'-1} = w_{\ell} u$.
    Now consider the word
    \[
    w' = w_{\ell-1} \op_{\ell} a^{m_\ell + R(r_{2,\ell})}b^{m_\ell}  u  \op_{\ell'} a^{m_{\ell'} - r_{2,\ell'}}b^{m_{\ell'}}
    \]
    that results from $w_n$ by removing one iteration of the loop in block $\ell'$
    and making up for it by inserting $R$ iterations of the loop in block $\ell$.
    Notice that $w'$ is accepted by the run
    that follows the resolved run on $w_n$ and repeats the designated loops on the extra letters.
    However, $w'\notin\Lang{\?A}$ because its last $\{a,b\}$-block
    contains more $b$'s than $a$'s.
\qedhere
\end{proof}

Notice that if the given 2CM terminates then our construction produces
a history-deterministic VASS where the number of counters corresponds to the length of the terminating run.
Therefore it remains open whether the language $k$-$\hdness$ problem is decidable, which ask whether there is an equivalent $k$-HDVASS for the given language.

The analogous statement for reachability
is simpler to prove;
by adapting the construction for the regularity problem of Parikh-automata~\cite{EGJLZ22}, 
we reduce from the universality problem, which is undecidable in reachability semantics~\cite[Theorem 10]{VVN1981}. 
\begin{restatable}{theorem}{thmlanguagehdnessreach}
    \label{thm:languagehdness-reach}
    It is undecidable to check if $\Lang{\?A} \in \RHvass$ holds for a given \Anvass $\?A$.
\end{restatable}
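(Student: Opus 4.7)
The plan is to reduce from the universality problem for $\Anvass$ with reachability semantics, which is undecidable by \cite[Theorem~10]{VVN1981}. Given an $\Anvass$ $\?B$ over alphabet $\Sigma$, fix fresh symbols $\#,a,b\notin\Sigma$ and build an $\Anvass$ $\?A$ over $\Sigma'=\Sigma\cup\{\#,a,b\}$ recognising
\[
L \ \eqdef\ \Lang{\?B}\cdot\{\#\}\cdot\{a,b\}^* \ \cup\ \Sigma^*\cdot\{\#\}\cdot\Lanblen.
\]
Both components lie in $\RNvass$: the first by appending a $\#$-transition from the accepting states of $\?B$ to a new final state with $a,b$-self-loops of zero effect, and the second by concatenating a trivial $\Sigma^*$-reader with the $\Anvass[1]$ for $\Lanblen$ from \cref{lemma:rhvsrn}. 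Since $\RNvass$ is closed under union (nondeterministically commit on the first transition to one of the two component machines while keeping the other counter block at zero), $\?A$ is effectively constructible.

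If $\Lang{\?B}=\Sigma^*$, the first component already absorbs the second, so $L=\Sigma^*\cdot\{\#\}\cdot\{a,b\}^*$ is a regular language and thus trivially belongs to $\RDvass[0]\subseteq\RHvass$.

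If $\Lang{\?B}\neq\Sigma^*$, pick any $w\in\Sigma^*\setminus\Lang{\?B}$. Because $\#\notin\Sigma$ and $\Lanblen\subseteq\{a,b\}^*$, the only way for a word of the form $w\#u$ to lie in $L$ is via the second component, which requires $u\in\Lanblen$; hence the residual $\{u\mid w\#u\in L\}$ equals $\Lanblen$. Assume for contradiction that $L\in\RHvass$, witnessed by an HD VASS $\?C$ with resolver $r$. Since $w\#\in L$ (take $n=0$), the resolver's run on $w\#$ ends in some accepting configuration $(q,\vec{0})$. The sub-VASS rooted at $(q,\vec{0})$, equipped with the restriction of $r$ to continuations beyond that prefix, is itself history-deterministic and recognises precisely $\Lanblen$: any $u\in\Lanblen$ is accepted because $r$ must yield an accepting continuation on $w\#u\in L$, while conversely any word accepted from $(q,\vec{0})$ extends the resolver's prefix to an accepting $\?C$-run on $w\#u$, placing $u$ in the residual. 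Hence $\Lanblen\in\RHvass$, contradicting \cref{lemma:rhvsrn}.

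The main technical point to nail down is the residualisation step: we must verify that the resolver's run on $w\#$ lands in a configuration whose counters are exactly $\vec{0}$, and that the restricted resolver is a genuine resolver for the induced sub-VASS. The former follows from $w\#\in L$ combined with the reachability acceptance condition (which forces zero counters at the accepting configuration), and the latter is a direct consequence of the defining property of HD automata on finite words, namely that the resolver produces an accepting run on every input in the language. With these pieces in place, undecidability of $L\in\RHvass$ transfers from the undecidability of $\Lang{\?B}=\Sigma^*$.
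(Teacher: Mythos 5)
Your proof is correct. It starts from the same reduction source as the paper (undecidability of universality for $\RNvass$, \cite[Theorem 10]{VVN1981}) and follows the same high-level plan: build a union language that collapses to a regular (hence HD) language when $\Lang{\?B}$ is universal, and otherwise hides a copy of $\Lanblen$ behind a witness $w\notin\Lang{\?B}$. The realisation differs in two ways, both to your advantage. First, where the paper embeds $\Lanblen$ via a convolution/cross-union $\$\cdot(L\ovee\emptyset)\cup\$\cdot(\emptyset\ovee\Lanblen)$ over a padded product alphabet, you use a plain marked concatenation $\Lang{\?B}\#\{a,b\}^*\cup\Sigma^*\#\Lanblen$ over disjoint alphabets, which avoids the product-alphabet bookkeeping entirely. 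Second, for the non-universal case the paper re-runs the pumping argument of \cref{lemma:rhvsrn} on the convolved words, whereas you invoke that lemma as a black box via a residualisation step: since $w\#\in L$, the resolver's run on $w\#$ must end in an accepting configuration, which in reachability semantics forces all counters to be exactly $\vec{0}$, so re-rooting the HD VASS at that configuration (with the shifted resolver) yields an HD VASS for the residual $\{u\mid w\#u\in L\}=\Lanblen$. You correctly identify this as the point needing care, and your justification is sound: the forward inclusion uses that the resolver's run on $w\#u$ extends its run on $w\#$, and the reverse inclusion uses that any accepting continuation from $(q,\vec{0})$ composes with the prefix run into an accepting run on $w\#u$. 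This residual argument is specific to reachability acceptance (for coverability the resolver could land at $(q,v)$ with $v\neq\vec{0}$, and the residual language from $(q,\vec 0)$ need not match), but that is exactly the setting of the theorem, so no gap results.
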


\begin{proof}
    We reduce from the undecidable universality problem for \vass languages in reachability semantics
    \cite[Theorem 10]{VVN1981}. The construction is the same as
    for the regularity problem of Parikh-automata, recently presented in \cite{EGJLZ22}.
    For an alphabet $\Sigma$ 
    let $\Sigma_\$=\Sigma\uplus\{\$\}$ for some fresh symbol $\$\notin\Sigma$.
    For two words $u,v$ let $u\otimes v$ be the word 
    $w=(a_1,b_1)(a_2,b_2)\ldots (a_k,b_k)$ so that either 
    $u = a_1a_2\ldots a_k$ and $b_1b_2\ldots b_k\in v\$^*$
    or
    $v= b_1b_2\ldots b_k$ and $a_1a_2\ldots a_k\in u\$^*$.

    For two languages $L,L'\subseteq\Sigma^*$ define their \emph{cross-union}
    $L\ovee L\subseteq (\Sigma_\$^2)^*$ to be the langugae of words $u\otimes v$ such that $u\in L$ or $v\in L'$. That is, for any word $w \in L \ovee L$, either the projection into the first components is 
    $\pi_1(w)\in L$ or that into the second components $\pi_2(w)\in L'$.

    Recall the language $\Lanblen=a^nb^{\le n} \in\RNvass\setminus \RHvass$, which is not HD recognisable.
    To show our claim, let $L$ be some given \Anvass language and 
    consider the language $\Lcross \eqdef \$ \cdot (L\ovee \emptyset) \cup \$\cdot (\emptyset\ovee \Lanblen)$.
    This is clearly in \RNvass. %
    Now, if $L=\Sigma^*$ is universal then $L\ovee\emptyset$ is universal over $\Sigma_{\$}^2$ and so $\Lcross = \$(\Sigma_{\$}^2)^* \in\RHvass$ (even, regular).
    If conversely, suppose $L$ is not universal as witnessed by $w\notin L$,
    then $\Lcross$ cannot be recognised by any $\ARhvass$ for the same reason as $a^nb^{\ge n} \not\in \RHvass$:
    suppose it is accepted by some $\ARhvass[k]$ run on $n$ states and 
    consider run of the resolver on the word $u= \$(w  \otimes a^{|w|+n+1})\in \Lcross$, thus must end with counter $\vec{0}$. The extension of $u$ by $(\$,b)^{n+1}$ is also accepting, it must remain at $\vec{0}$ and cycle on accepting states. Hence $u(\$,b)^{|w|+n+1}\in \Lcross$ cannot be distinguished from $u(\$,b)^{|w|+n+2}\not\in \Lcross$.
\end{proof}

\subsection{Regularity}
\label{ssec:regularity}
We turn to the decision problem of whether a given \Avass recognises a regular language.
This regularity question is undecidable for general \Anvass \cite{JM1995}.
It again turns out that for history-deterministic VASSs, the decidability status of regularity depends
on the dimension.
For \Ahvass[1], one can effectively construct a language equivalent DOCA \cite{PT2022}, for which checking regularity remains decidable \cite{BGJ2014,VP1975}.

\begin{theorem}
    \label{thm:regularity1}
    Given a \Ahvass[1] $\?A$, checking if $\Lang{\?A}$ is regular is decidable in \EXPSPACE.
\end{theorem}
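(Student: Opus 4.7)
The plan is to chain together two known results. By Theorem~19 of~\cite{PT2022}, from any \Ahvass[1] $\?A$ one can effectively construct a language-equivalent deterministic one-counter automaton with zero tests (DOCA) $\?A'$, incurring at most a doubly exponential blow-up in the number of states. Regularity of a given DOCA is decidable, and in fact in \NL\ in the size of the DOCA by~\cite{VP1975,BGJ2014}. Cascading these two ingredients yields an algorithm whose space usage is $O(\log |\?A'|) = 2^{O(|\?A|)}$, i.e., precisely \EXPSPACE\ as a function of $|\?A|$.

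To actually achieve this bound I would not materialise $\?A'$ in full. Instead I would keep $\?A$ together with a compact symbolic description of the PT2022 construction, and produce individual states and transitions of $\?A'$ on demand. A single state of $\?A'$ can be encoded in $2^{O(|\?A|)}$ bits (polynomial control information plus a counter threshold of doubly exponential magnitude, written in binary), and the \NL\ regularity procedure for DOCA keeps only a logarithmic number of such encodings live at any one time, so the total working space stays within \EXPSPACE.

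The main obstacle is not automata-theoretic but complexity bookkeeping: one has to verify that (i)~the PT2022 translation is sufficiently local to support on-demand access to states and transitions of $\?A'$ from $\?A$, without enumerating all reachable states of $\?A'$, and (ii)~the \NL\ DOCA-regularity algorithm can be fed its input through this succinct interface without secretly using space polynomial in $|\?A'|$. Both are natural to expect given the explicit form of the PT2022 construction and the standard oracle semantics of \NL, but careful accounting is needed at each step to avoid an accidental blow-up to doubly exponential space.
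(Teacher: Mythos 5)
Your proposal is correct and follows exactly the paper's route: translate the \Ahvass[1] into a language-equivalent DOCA via Theorem~19 of \cite{PT2022} (doubly exponential blow-up) and then invoke the \NL\ regularity test for DOCA \cite{VP1975,BGJ2014}, so that the logarithmic space of the latter applied to a doubly exponential object lands in \EXPSPACE. Your additional remark that the DOCA must be accessed on demand rather than materialised is a sensible piece of bookkeeping that the paper leaves implicit, not a different argument.
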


Although checking regularity of DOCA is \NL-complete,
the added complexity here is due to the doubly exponentially large DOCA produced in the reduction.
Since $\AhvassE[1]$ can be transformed into $\Ahvass[1]$ by \cref{lemma:hd1ishd1e}, the theorem also holds for $\AhvassE[1]$.
We now show undecidability already for dimension $2$.

\begin{theorem}
    \label{thm:regularity2}
    Given a \Ahvass[2] $\?{A}$, it is undecidable if $\Lang{\?A}$ is regular. 
\end{theorem}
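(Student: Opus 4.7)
The plan is to reduce from the 2CM finiteness problem (\cref{lem:2cm-finiteness}). Given a 2CM $M$, I would construct a \Ahvass[2] $\?A$ over alphabet $\Sigma = \CMOPS \cup \{a,b\}$ such that $\Lang{\?A}$ is regular if and only if $M$'s unique run visits only finitely many distinct configurations. The construction extends the weak simulation of \cref{lem:2-2-vass}: the two counters of $\?A$ track those of $M$, faithful simulation edges are present, and each zero-test $\ztest_i$ additionally carries a cheating edge that decrements counter $i$ and lands in an accepting universal sink $s$. From every simulation state I further add a transition reading the fresh letter $a$ into a new accepting state $r_a$ that decrements counter $1$, with $r_a$ self-looping on $a$ with the same effect; symmetric states and transitions are added for $b$ and counter $2$.

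Writing $c_i(u)$ for the value of counter $i$ after executing the faithful prefix $u$, the recognised language is
\[
    \Lang{\?A}
    = \{u\, a^m : u \text{ faithful},\ 1 \le m \le c_1(u)\}
    \cup \{u\, b^m : u \text{ faithful},\ 1 \le m \le c_2(u)\}
    \cup \{u\,\ztest_i\, v : u \text{ faithful},\ c_i(u) > 0,\ v \in \Sigma^*\}.
\]
Exactly as in \cref{lem:2-2-vass}, $\?A$ is history-deterministic: the only non-deterministic choice is at a zero-test with a positive counter, where going to the universal sink is language-maximal.

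Correctness is a Myhill--Nerode argument. If $M$'s reach set is finite then $c_1(u), c_2(u)$ are uniformly bounded over faithful prefixes, and the residual language of every prefix is determined by a finite amount of information --- the reached $M$-configuration, a bounded remaining $a$- (or $b$-)quota in state $r_a$ (or $r_b$), or one of the two absorbing residuals $\Sigma^*$ and $\emptyset$ --- so $\Lang{\?A}$ has finitely many residuals and is regular. Conversely, if $M$'s reach set is infinite then, as $M$ has finitely many states, one of $c_1, c_2$ is unbounded along the faithful run; without loss of generality pick faithful prefixes $u_1, u_2, \ldots$ with strictly increasing values $c_1(u_i) = n_i$. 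For $i < j$ the word $a^{n_j}$ lies in the residual at $u_j$ but not at $u_i$, so there are infinitely many residuals and $\Lang{\?A}$ is not regular.

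The main obstacle I anticipate is the finite-reach direction: carefully checking that the various kinds of prefixes (faithful, mid-$a^m$, mid-$b^m$, cheating, dead) together contribute only finitely many equivalence classes. This reduces to the uniform counter bound, but needs to be spelled out for each prefix type.
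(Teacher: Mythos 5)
Your proposal is correct and follows essentially the same route as the paper: reduce from 2CM finiteness, weakly simulate the machine on the two counters with cheat-to-universal-sink edges at incorrect zero tests, and append a countdown phase whose admissible length reveals the current counter values, so that the language is regular iff the reachability set is finite. The only differences are cosmetic --- the paper uses a single countdown letter bounded by $1$ plus the sum of both counters and a pumping argument in place of your Myhill--Nerode residual count, and it additionally adapts the construction to reachability semantics, which your coverability-based version would need a small tweak (extra trailing symbols allowing the counters to be emptied) to cover.
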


\newcommand{\corr}[1]{\mathit{correct}_{#1}}
\newcommand{\incorr}[1]{\mathit{incorrect}_{#1}}
\begin{proof}
    By reduction from the finiteness problem for 2CM (\cref{lem:2cm-finiteness}).
For a given 2CM $M$ we construct a \Ahvass[2] whose language will be regular iff $M$'s run visits only finitely many configurations.
We make the argument for coverability semantics first.

Let $\rho=\op_1\op_2\ldots$ be the faithful run of $M$ and $\abs{\rho}\in\N\cup\{\infty\}$ for its length.
Write $\corr{k}$ for its length-$k$ prefixes and let $x_k$ %
be $1$ plus the sum of both counter-values in the configuration $M$ reaches after reading $\corr{k}$.
Further, wherever $\corr{k}=\corr{k-1}\dec_i$, define $\incorr{k}$ as $\corr{k-1}\ztest_i$.

Consider the language $L=G\uplus B$ over the alphabet $\Sigma=\CMOPS\uplus\{a\}$,
\[
    G \eqdef \bigcup_{k\ge 0}
    \left(
        \rho_k\cdot a^{\le x_k}
    \right)
    \qquad
    \text{and}
    \qquad
    B \eqdef \bigcup_{k\ge 0}
    \left(
    \incorr{k}\cdot\Sigma^*
    \right)
\]
$G$ consists of words that 
describe some length-$k$ prefix of $M$'s run followed by $x_k$ or fewer symbols $a$;
$B$ contains all words describing the run of $M$ up to length-$k$, followed by an incorrect zero-test, and then anything.

We claim that this language $L$ is recognised by a \Ahvass[2].
To see this, again build a \vass that weakly simulates $M$ as done before, for example in the proof of \cref{theorem:languagehdnessundecidable}.
This will simulate increment and decrement operations faithfully,
reading letters $\inc_i$ or $\dec_i$, respectively.
For any step $q\step{\ztest_i}q'$ in $M$, 
the \vass $\?A$ will have a transition $(q,\ztest_i,\vec{0},q')$
as well as one that reads $\ztest_i$, decreases counter $i$ and leads to a universal state.
This allows to accept exactly all words in $B$.
In addition, from any state $q$ of $M$, $\?A$ can move to a new countdown phase:
there is a transition
$q\step{a,\vec{0}}c$ to a new, final, control state
that can continue to read $a's$ while at least one of the counters remains non-zero.
This allows to accept exactly all words in $G$.
Note that the only non-determinism is for letters $\ztest_i$ when $M$'s $i$th counter after reading $\rho_i$ is not zero.
In this case, the only language-maximal choice is to move to the universal state.
The constructed system is therefore history-deterministic.

To conclude the proof, we argue that $L$ is regular iff $\rho$ visits only finitely many configurations.
Indeed, if so, then
$G$
is finite because all $x_i$, $i\le k$ are bounded,
and
$B$
is regular because at most $k$ many words $\incorr{k}$ exist.
So $L$ is the finite union of regular languages and thus regular.

Conversely, suppose that $M$'s run $\rho$ visits infinitely many different configurations. Then in particular,
there are infinitely many faithful prefixes $\rho_k$.
Let us assume towards contradiction that $L$ is regular and recognised by a DFA with $d$ many states. 
We pick a prefix $\rho_k$ so that $x_k>d$ and consider the word $\rho_ka^{x_k}\in L$.
While reading the suffix $a^{m}$, our DFA must repeat some cycle of length $c\le d$.
But then it must also accept $\rho_ka^{x_k+c}\notin L$ by going through that cycle twice.

The same proof goes through for the reachability semantics
if we set
$
G \eqdef \bigcup_{k\ge 0}
\left(
    \rho_k\cdot a^{=x_k}
\right)
$
and
$B \eqdef \bigcup_{k\ge 0}
\left(
    \incorr{k}\cdot\Sigma^{\ge x_k-1}
\right)
$.
Then again, if the run of $M$ visits finitely many configurations then both $G$ and $B$ are regular.
Otherwise $G$ is not regular.
The extra symbols at the end (of words in $G$ and $B$) allow a run of the VASS $\?A$ to decrease the counters to $0$ and accept
(and therefore to conclude that language $L=G\uplus B$ is in $\RHvass[2]$).
\end{proof}
 
\newpage

\newpage

\appendix

\section{The Structure of Resolvers}
\label{app:lang-max}

We observe that any resolver $r$ must always make language-maximal choices. 
To formalise, let us write $\Post[a]{s,v}\eqdef \{(s',v')\mid (s,v)\step{a}(s',v')\}$ for the finite set of possible $a$-successor configurations of $(s,v)$.
Suppose a run produced by $r$ %
leads up to configuration $(s_i,v_i)$
and for the next letter $a_i$, it selects a continuation 
$(s_i,v_i)\step{a_i}(s_{i+1},v_{i+1})$.
Then $\Lang{s_{i+1},v_{i+1}} \supseteq
\bigcup\Lang{\Post[a_i]{s_i,v_i}}$.
When considering languages of finite words a useful observation is that
making only language maximal choices is not only necessary, but also a sufficient 
condition for $r$ to be a resolver.

\begin{restatable}{proposition}{propLangMax}
    \label{prop:hd=language-maximal}
    A function $r$ as above is a resolver iff
    all its choices are language maximal.
\end{restatable}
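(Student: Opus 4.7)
The plan is to prove both directions of the equivalence separately. The forward direction is by contrapositive: a non-language-maximal choice at some step lets us construct a word in the language on which the resolver-produced run is not accepting. The backward direction is an induction on the prefix length showing that the suffix still to be read always lies in the language of the configuration currently reached by the resolver.

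For the forward direction, suppose $r$ is a resolver but at some step $i$, after reading prefix $u = a_0\dots a_{i-1}$ and reaching $(s_i,v_i)$, the chosen successor $(s_{i+1},v_{i+1})$ fails to be language-maximal. Then there is a configuration $(s',v')\in\Post[a_i]{s_i,v_i}$ and a word $w'\in \Lang{s',v'}\setminus\Lang{s_{i+1},v_{i+1}}$. The word $w = u\,a_i\,w'$ lies in $\Lang{s_0,v_0}$, since we can follow the resolver's run to $(s_i,v_i)$, then step to $(s',v')$, and then use an accepting continuation on $w'$. However, the run produced by $r$ on $w$ reaches $(s_{i+1},v_{i+1})$ after $u\,a_i$, from which no continuation can accept $w'$. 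Hence $r$ fails to resolve this word, contradicting the assumption.

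For the backward direction, assume every choice made by $r$ is language-maximal, and take $w = a_0\dots a_{n-1}\in\Lang{s_0,v_0}$. Let $(s_0,v_0),(s_1,v_1),\dots,(s_n,v_n)$ be the configurations visited by the run produced by $r$ on $w$. I claim by induction on $i$ that $a_i\dots a_{n-1}\in\Lang{s_i,v_i}$. The base case $i=0$ is the assumption on $w$. For the step, any accepting run witnessing $a_i\dots a_{n-1}\in\Lang{s_i,v_i}$ must begin with an $a_i$-transition to some $(s',v')\in\Post[a_i]{s_i,v_i}$, giving $a_{i+1}\dots a_{n-1}\in\Lang{s',v'}$. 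Language-maximality of the resolver's pick yields $\Lang{s_{i+1},v_{i+1}}\supseteq\Lang{s',v'}$, so $a_{i+1}\dots a_{n-1}\in\Lang{s_{i+1},v_{i+1}}$. Setting $i=n$ gives $\eps\in\Lang{s_n,v_n}$, i.e.\ $(s_n,v_n)$ is accepting.

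There is no serious obstacle. The only point requiring care is that the resolver's chosen transition must be a valid VASS step (i.e.\ keeps the counters in $\N^k$); validity of some $a_i$-step from $(s_i,v_i)$ is guaranteed by the inductive hypothesis that $a_i\dots a_{n-1}\in\Lang{s_i,v_i}$, which forces $\Post[a_i]{s_i,v_i}$ to be non-empty. Because finite-word languages are considered, the induction terminates at $i=n$, so no fairness or liveness argument about $r$'s long-term behaviour is required.
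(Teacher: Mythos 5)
Your proof is correct and follows essentially the same approach as the paper: the forward direction is the identical contrapositive argument, and your backward direction's induction (maintaining that the remaining suffix lies in the language of the current configuration) is just the direct form of the paper's contradiction argument, which locates the last configuration that can still accept the suffix and derives a non-maximal choice. Both hinge, as you note, on the finiteness of the words.
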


This does not depend on the finiteness of the state space.
A direct consequence is that resolvers can be assumed to be \emph{positional}:
That is, if any resolver $r$ exists then also one whose
decisions only depend on the current configuration and given letter,
not on the whole prefix run:
$r(\rho(s,v),a)=r(\rho'(s,v),a)$
for any two $\rho,\rho'\in (\states\x\N^k\x\Sigma)^*$ and letter $a$.
\begin{proof}
    Suppose a candidate resolver $r$ does not always make language-maximal choices.
    That is, for some word $w$ the corresponding run chosen by $r$ ends in some configuration
    $c$ and for some letter $a$, it moves to a successor configuration $c'$ that is not language maximal.
    Then there exist a suffix word $w'$ so that some run from $c$ on $aw'$ is accepting but no run from $c'$ on $w'$ is accepting, including the one chosen by $r$. So $r$ is not a resolver.

    Conversely, suppose a candidate resolver $r$ that always makes language maximal choices
    and assume towards a contradiction that it is not a resolver.
    This means that Player~1, wins the letter game from the initial configuration $c_0$: for some word $w=a_0a_1\ldots a_k\in \Lang{c_0}$ the run $c_0\step{a_0}c_1\step{a_1}\ldots \step{a_k}c_{k+1}$ constructed by $r$ is not accepting.
    Since some accepting run on $w$ exists, there must be a last configuration $c_j$ on this run
    which can still accept the suffix $w[j] = a_j a_{j+1}\ldots a_k$.
    This uses the assumption that we consider languages of finite words, not infinite ones.
    We conclude that the step $c_j\step{a_j}c_{j+1}$ was \emph{not} language maximal,
    since $w[j] \in \Lang{c_{j}}$ but $a_{j+1}\ldots a_k \notin \Lang{c_{j+1}}$.
    Contradiction.
\end{proof}

For \Ahvass[1] it is sufficient for a resolver to be semi-linear~\cite{PT2022}, meaning that, for each state and proposed letter, the counter configurations for which each available choice should be chosen can be expressed as a semi-linear set. However~\cite{PT2022} does not show the full power of semi-linear resolvers are required, and most natural examples appear to only require threshold queries (and often only to distinguish between zero and non-zero counter). We show that threshold comparisons with the counters is not sufficient: in the following example the system  must have  access to the parity of the counter.
\begin{figure}
\centering
\includegraphics{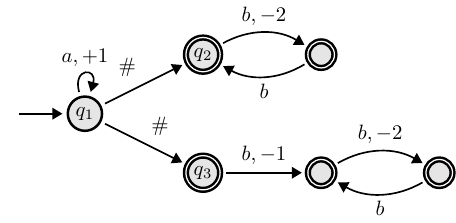}
\caption{A 1-\hdvass with a resolver that requires a resolver that depends on more than threshold comparisons.}
\label{fig:slresolver}
\end{figure}

\begin{example}
\label{eg:nonthreshold}

Consider the \Ahvass[1] depicted in \cref{fig:slresolver}. 
Observe that:
\begin{itemize}
\item $\Lang{q_1,0} = \{a^n\#b^m \mid m\le n\}$,
\item $\Lang{q_2,n} = \{ b^m \mid m\le n \text{ if }n \text{ even}  \text{ or } m \le n-1 \text{ if } n \text{ odd}  \}$,
\item $\Lang{q_3,n} = \{ b^m \mid m\le n \text{ if }n \text{ odd}  \text{ or } m \le n-1 \text{ if } n \text{ even}  \}$.
\end{itemize}
Hence, upon reading $\#$ a resolver must decide 
decide whether $\Lang{q_2,n} \subset \Lang{q_3,n}$ or $\Lang{q_3,n} \subset \Lang{q_2,n}$,
which is possible by looking at the parity of the counter value.
\end{example}

\section{Additional Material for Expressiveness (Section~\ref{sec:expressiveness})}
\label{app:expr}

\subsection{Comparison of History-deterministic VASS with and without \texorpdfstring{$\epsilon$}{epsilon}-transitions}

\label{app:epsrequired}

Given a number $n\in\N,n> 0$ let $bin(n)$ be the binary representation of $n$ in $1\{0,1\}^*$.
Consider the language 
$\LmustVASSe = \mathit{bin}(n)\#0^{\le n}\#$
, which is, for any number $n$, the binary representation of $n$ followed by $\#$, followed by at most $n$-many $0$'s. A result of~\cite{G1978} says that this language cannot be represented by any real-time machine, that is, machines without $\epsilon$-transitions. For completeness, we recall the argument in brief: observe that the value of the maximum counter can be at most $\norm{\delta}\log(n)$ after reading $bin(n)$, where $\norm{\delta}$ is the maximal counter effect of any transition. Therefore there are at most $poly(\log(n))$  configurations reachable after reading $bin(n)$, however, there are $2^{\log(n)-1}$ different numbers of length $|bin(n)|$. As a result, there are two numbers $n< m$ with $|bin(n)| = |bin(m)|$ (with $\log(n)$ large enough) for which reading either $bin(n)$ or $bin(m)$ has the same configuration; in which case either $bin(n)\#0^m$ is incorrectly accepted or $bin(m)\#0^n$ is incorrectly rejected.

\hdoneishdonee*
\begin{proof}
Clearly $\CHvass[1] \subseteq \CHvassE[1]$. We show that $\epsilon$ transitions can be removed from a $1\text{-}\hdvasse$. If there are no cycles then this is done in the standard way, merging them with the prior letter-consuming transitions. For cycles there are three cases. There are finitely many destinations on zero cycles and can be treated as in the acyclic case. Negative cycles are not beneficial, so the resolver should not iterate them. Therefore, we only add transitions necessary to access particular states, but keeping the counter maximal. Cycles with positive effect, for the purposes of maximal language acceptance, should be repeated infinitely. Thus it suffices to go to a copy of the automaton behaving only as a state-machine (without counter effects). Our procedure, adds finitely many counter maximal transitions.  We observe the new system is also HD, when reading a letter the resolver can decided where the resolver for the system would go with $a$ and then a sequence of $\epsilon$ transitions and move to a place with the same state and at least as high counter, which is language maximal.
\end{proof}

\section{Additional Material for Closure Properties (Section~\ref{sec:closure})}
\label{app:closure}

\lemunion*
\begin{proof}
 	 Let $L$ and $L'$ be recognised by by $\Ahvass[k]$ $\?A$ and $\Ahvass[k']$ $\?B$ respectively,
     in the coverability semantics. 
     W.l.o.g., assume that both are \emph{complete}, meaning there exists a (not necessarily accepting) run on every word.
     This can be guaranteed by adding new transitions to non-accepting sink states (which any resolver will avoid if possible).
 	The language $L\cup L'$ is accepted by the $\Ahvass[(k+k')]$ obtained by taking product $\?A\x\?B$, with $k+k'$ counters, where the first $k$ counters simulate the counters of $\?A$ and the last $k'$ counters simulate the counters of $\?B$. 
    A state in the product $(q,q')$ is accepting if either $q$ or $q'$ is accepting, or both are accepting.
 	The resolver for the product from a configuration $((q_1,q_1'), v)$ on a letter $a$ chooses the product of the transitions chosen by the resolver of $\?A$ and $\?B$ from the configurations $(q_1,v_k)$ and $(q'_1,v_{k'})$ respectively on the letter $a$, where $v_k$ and $v_k'$ are the projection of $v$ to the first $k$ and last $k'$ coordinates. 
 	
 	The construction works similarly for the intersection of $\CHvass$ and $\RHvass$ languages by taking accepting states $(q,q')$ in the product if both $q$ and $q'$ are accepting.
 	 \end{proof}

 We now consider closure of the language classes $\CHvass$ and $\RHvass$ for other operations, defined here shortly.
 The \emph{concatenation} of two languages $L$ and $L'$ is the languages $L\cdot L'= \{w=uv\mid u\in L,~ v\in L'\}$. Let $\Sigma$ and $\Gamma$ be  some alphabets and $h:\Sigma^*\rightarrow \Gamma^*$ be a homomorphism.
 The \emph{homomorphic image} of a language $L\subset\Sigma^*$ is $h(L)=\{h(w)\mid w\in L\} \subset \Gamma^*$. Similarly, the \emph{inverse homomorphic image} of a language $L\subset \Gamma^*$ is $h^{-1}(L) =
 \{w\in \Sigma^* \mid h(w) \in L\} $.
 
 Let $\Sigma=\{a_1,a_2,\dots,a_n\}$ be an alphabet. The \emph{Parikh image of a word} $w\in \Sigma^*$ is the vector $\Psi(w)=(v_1,v_2,\dots,v_n)$, where $v_i$ is the number of occurrences of $a_i$ in $w$. The \emph{Parikh image of a language} $L$ is the set of Parikh images of words in $L$.
For a language $L$, its \emph{commutative closure} $CC(L)$ is the language $\{w \mid \exists u\in L. \Psi(w)=\Psi(u)\}$.
\thmcovcl*
\begin{proof}
	The proof for closure under {union} and {intersection} is by \cref{lem:unionint}.
    For  inverse homomorphisms, let $L\subset \Gamma^*$ be in $\CHvass$ accepted by a $\Ahvass[k]$ $A$ with a resolver $r$. Let $Q$ be the set of states of $A$ and $\norm{\delta}$ be the largest absolute effect among all transitions.  
	Let $h:\Sigma^*\rightarrow\Gamma^*$ be a homomorphism and $\ell$ be such that $|h(a)|\leq \ell$, for all $a\in \Sigma$. Then $h^{-1}(L)$ is accepted by the $\Ahvass[k]$ $A'$ with the  states as $Q\times D^k$, where $D=[0,\ell\norm{\delta}]$. 	
	For every $a\in \Sigma$, a transition in $A'$ from $(q,v)$ to $(q',v')$ on $a$ will correspond to a run in $A'$ from $q$ to $q'$ on $h(a)$, so that the resolver $r'$ for $A'$ will simply choose the transition corresponding to the run chosen by resolver $r$ in $A$ on $h(a)$. 
	
	We need to show that $A'$ does not accept any word $w\not \in h_1(L)$. To show this, we need to ensure that if a run on $h(a)$ gets blocked due to some counter dropping below zero, the corresponding transition in $A'$ is also blocked. To do this, the transition in $A'$ has effect equal to the maximum negative effect in any prefix of the run on $h(a)$. The rest of the effect in the run on $h(a)$ is delayed to the next transition. Since the maximum effect is bounded by $\ell\norm{\delta}$, this can be stored in the states. The next transition will therefore have the sum of the effect delayed from the previous transition and the maximum negative effect in the prefix of the current transition. The details of the construction are below.
	
	Let $\rho=t_1t_2\dots t_{k'}$ be a path in $A$ from $q$ to $q'$ on $h(a)=\tfunction{\rho}$. Let $f_{ij}=\effect{t_1t_2\dots t_j}|_i$, i.e, the effect in the prefix up to $j$th transition projected to the $i$th counter. Let $f_i=\min_j (f_{ij})$ and $e_i=\min(f_i,0)$. Thus $e_i$ gives the largest negative effect in any prefix of the run. For every path $\rho$ on $h(a)$, we have a transition $((q,v),a,e',(q',v'))$ in $A'$ if $e'=e+v$, where $e=(e_1,e_2,\dots,e_k)$ and $v'+e=\effect{\rho}$.
	A state $(q,v)$ of $A'$ is initial if $q$ is initial in $A$ and $v=\vec{0}$ and $(q,v)$ in $A'$ is accepting if $q$ is accepting in $A$. 
	
    \medskip
    \noindent
	Now, we give counterexamples for the operations under which $\CHvass$ is not closed.		
    \begin{description}
        \item[Complementation.]
    Consider the language $\Lanblen=\{a^nb^{\leq n}\}$ which is in $\CDvass$. The complement of $\Lanblen$ is not even in $\CNvass$. Indeed if it were in $\CNvass$, then $\Lanblen^{c}\cap  a^{*}b^{*}= a^nb^{\ge n}$ would be in $\CNvass$, which is not the case.
        \item[Concatenation.]
    Consider the language $\LnotHDVASS= \Sigma^* \cdot a^nb^{\le n}$. By \cref{lemma:lastblocknotHD}, $\LnotHDVASS\not \in \CHvass$.
        \item[Homomorphisms.]
    Consider $L=\{c,d\}^* a^n b^{\le n}\in\CHvass$ which is accepted by even a $\Advass$. Let $h$ be the homomorphism $h(c)= a, h(d) = b$, which gives $h(L)=\LnotHDVASS\not \in \CHvass$ by \cref{lemma:lastblocknotHD}.
    \item[Kleene star.]
    Consider $(a^nb^{\leq n})^{*}$ which is the Kleene star of $\Lanblen\in \CHvass$. The proof of \cref{lemma:lastblocknotHD}, also shows that $\Lanblen^*$ is not in $\CHvass$.	
\item[Commutative closure.]
    Consider the commutative closure of $\Lanblen$, $L=CC(\Lanblen) = \{w \mid \#a \ge \#b\}$. If $L$ is in $\CHvass$, then $L\cap b^*a^* = b^na^{\ge n} $ is also in $\CHvass$ as $\CHvass$ is closed under intersection. However $b^na^{\ge n}$ is not even in $\CNvass$.
    \qedhere
\end{description}
\end{proof}

\thmreachcl*
\begin{proof}
	Closure under intersection follows from \cref{lem:unionint}.
	For the \emph{inverse homomorphic image}, a construction similar to the $\CHvass$, with states $(q,\vec{0})$ taken to be accepting in $A'$ for every $q$ that is accepting in $A$. 
	Note that any accepting run on $h(w)h(a)$, for any word $w\in \Sigma^*$ and $a\in \Sigma$, the effect of the run on $h(a)$ cannot be positive on any counter as it would lead to a non-zero counter value in the final configuration contradicting that the run is accepting. Therefore, the maximal negative effect encoded in the transition in our construction will always lead to a state $(q,\vec{0})$ and not delay any positive effect for later. This proves that $\RHvass$ are closed under inverse homomorphic image.
	
    \medskip
    \noindent
	Now, we give counterexamples for the operations under which $\RHvass$ is not closed.		
    \begin{description}
        \item[Unions.]
    Consider the language $\Lnotrunion=a^nb^{n} \cup a^n b^{2n}$. Both of the languages $a^nb^n$ and $a^nb^{2n}$ are in $\RHvass[1]$. Suppose $\Lnotrunion$ is recognised by a $\ARhvass[k]$ $A$. Since, $a^nb^n$ is in $L$, the resolver gives an accepting for all $n$, i.e, in a final state with all counters $0$. Let $n_1<n_2$ be such that the run given by resolver on $a^{n_1}b^{n_1}$ and $a^{n_2}b^{n_2}$ end in the same state $q$. 
	Since $a^{n_1}b^{n_1+n_1}$ is also accepted, the resolver extends the run from $(q,\vec{0})$ on the suffix $b^{n_1}$ and gives an accepting run. This also gives an accepting run on $a^{n_2}b^{n_1+n_2}$ which is a contradiction.
        \item[Complementation.]
    Consider $\Lanbgen = a^n b^{\ge n}$ which is in $\RHvass[1]$. Recall that $\Lanblen a^nb^{\le n} = \Lanbgen^c \cap a^*b^*$ is not in $\RHvass$ by \cref{lemma:rhvsrn}. If $\Lanbgen^c$ was in $\RHvass$, then so would $\Lanblen$ due to closure under intersection leading to a contradiction.
        \item[Concatenation.]
    Consider the concatenation of $a^*$ and $a^nb^{n}$, both in $\RDvass$, which gives the language $\Lanblen= a^nb^{\le n}$, which is not in $\RHvass$ by \cref{lemma:rhvsrn}. 
        \item[Homomorphisms.]
    Consider $\{c\}^* a^n b^{n}$ which is in $\RHvass[1]$ (even $\RDvass[1]$), and $h(c)= a$, gives the language $\Lanblen$ as above which is not in $\RHvass$ by \cref{lemma:rhvsrn}.
        \item[Kleene star.]
    Consider $\Lanbnastar = a^nb^na^*$, which is in $\RHvass[1]$. Indeed $\Lanbnastar^*$ is not in $\RHvass$. The run given by the resolver on $a^nb^na^m$ must end with $\vec{0}$, for all $m\ge0$. In particular, there exist $m_1<m_2$ such that the configuration reached on $a^nb^na^{m_i}$ are the same and therefore accept the same continuations. Thus $a^nb^na^{m_2} b^{m_2}$ cannot be distinguished from $a^nb^n a^{m_1} b^{m_2}$, which is a contradiction.
    \label{lang:lanbnastar}
        \item[Commutative closure.]
    $CC(a^n b^{\ge n}) \cap b^*a^* = b^{\ge n}a^n = b^n a^{\le n}$, which is not in $\RHvass$ by the same proof as \cref{lemma:rhvsrn} for $a^nb^{\le n}$ not being in $\RHvass$. 
            \qedhere
\end{description}
\end{proof}

To show that taking product for union ($\CHvass$) and intersection ($\CHvass$ and $\RHvass$) is not optimal in terms of number of counters, we have the following theorem. 

\begin{theorem}
	$\CHvass[1]$ is not closed under union and intersection. $\RHvass[1]$ is not closed under intersection.
\end{theorem}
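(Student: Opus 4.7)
The plan is to verify the three witness languages already introduced in the preceding paragraphs separate the classes as claimed. On the positive side, each constituent --- $a^nb^{\le n}c^*$ and $a^nb^*c^{\le n}$ on the coverability side, $a^nb^nc^*$ and $a^nb^*c^n$ on the reachability side --- is recognised by a simple deterministic $1$-\vass. In each case the $a$-block is counted on the counter; then the system either decrements on $b$'s and loops with zero effect on $c$'s, or skips over $b$'s and decrements on $c$'s. For reachability, acceptance additionally requires the final counter to be zero, which is achieved by the natural construction. Hence each constituent is in $\CDvass[1]$ (resp.~$\RDvass[1]$), and thus in $\CHvass[1]$ (resp.~$\RHvass[1]$).

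The two intersection non-closure claims are handled uniformly via context-freeness. The languages $\Lnotcint = a^nb^{\le n}c^{\le n}$ and $\Lnotrint = a^nb^nc^n$ are both not context-free: applying the CFL pumping lemma to $a^nb^nc^n$ for $n$ larger than the pumping constant, any decomposition $uvxyz$ with $|vxy|\le p$ has $vxy$ confined to at most two consecutive blocks, and pumping either direction either drops the $a$-count below the $b$-count (for $\Lnotcint$) or breaks the equality between two block sizes (for $\Lnotrint$). Since $\CHvass[1]\subseteq \CNvass[1]$ and $\RHvass[1]\subseteq \RNvass[1]$ are subsumed by the context-free languages via one-counter automata, the combined languages lie outside the respective HD classes.

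For the union case, $\Lnotcunion = a^nb^{\le n}c^*\cup a^nb^*c^{\le n}$ is itself context-free (as a union of two CFLs), so a direct CFL argument fails. Here I would appeal to the result of Prakash and Thejaswini \cite[Theorem 19]{PT2022}, restated in Section~\ref{sec:decision}: every $\CHvass[1]$ language is recognised by a deterministic one-counter automaton (DOCA) with state-based acceptance. DOCA languages are closed under complementation (by completing with a sink and swapping accepting states) and under intersection with regular languages (product construction), and are subsumed by context-free languages. If $\Lnotcunion$ were in $\CHvass[1]$, then $\overline{\Lnotcunion}\cap a^*b^*c^* = \{a^nb^mc^\ell : m>n \text{ and } \ell>n\}$ would be context-free. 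However, applying CFL pumping to the word $a^nb^{n+1}c^{n+1}$ for $n$ larger than the pumping constant yields a contradiction: $vxy$ meets at most two consecutive blocks, and pumping either up or down breaks one of the strict inequalities $m>n$ or $\ell>n$.

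The main obstacle is the union case, which cannot be dispatched by a direct CFL non-membership argument since the union of the two CFL constituents remains context-free; the detour through the DOCA characterisation of $\CHvass[1]$ is therefore essential. The intersection cases, in contrast, follow readily from classical pumping in the context-free languages.
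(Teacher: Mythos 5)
Your proof is correct, but it takes a genuinely different route from the paper's on two of the three claims. For the two intersection statements you argue via non-context-freeness: $a^nb^{\le n}c^{\le n}$ and $a^nb^nc^n$ fail the CFL pumping lemma, and every $1$-VASS language (in either semantics) is a one-counter, hence context-free, language. This is sound --- I checked the case analysis for $a^pb^pc^p$ and every placement of $vxy$ admits a pumping direction that violates $m\le n$ or $\ell\le n$ --- and it is in fact \emph{stronger} than what the paper proves for the coverability case, since it shows $\Lnotcint\notin\CNvass[1]$ rather than merely $\notin\CHvass[1]$. The paper uses exactly your argument for the reachability intersection, but for the coverability intersection it instead runs a direct resolver analysis (bounded vs.\ unbounded counter after $a^nb^n$, pigeonhole or cycle-repetition). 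For the union, you correctly identify that $\Lnotcunion$ is itself context-free so the direct CFL route fails, and you detour through \cite[Theorem~19]{PT2022} ($\CHvass[1]\subseteq$ DOCA languages), closure of DOCA languages under complementation and regular intersection, and a pumping argument on $\{a^nb^mc^\ell : m>n,\ \ell>n\}$; all of these ingredients are legitimate and the paper itself invokes the first two in Section~\ref{sec:decision}. The paper's own union proof is instead a self-contained and considerably more technical analysis of the resolver's run on $a^nb^nbc^n$: a bounded/unbounded dichotomy on the counter, extraction of an infinite subsequence agreeing on the final state and the last $|Q|$ transitions, and a language-maximality contradiction. The trade-off is clear: your route is shorter and cleaner but leans on a nontrivial imported determinisation theorem specific to dimension~$1$ and coverability; the paper's route is elementary and stays entirely inside the HD-VASS framework. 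One cosmetic nit: in your intersection sketch the violated inequality is not always ``$a$-count below the $b$-count'' --- for $vxy$ straddling the $a$--$b$ boundary the violation is $\ell>n$ --- but this does not affect correctness.
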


\begin{proof}
	
	\begin{description}
		\item[Union.] 
		Consider the language $\Lnotcunion= a^nb^{\le n}c^* \cup a^nb^*c^{\le n}$, which is the union of two languages in $\CHvass[1]$. Suppose $\Lnotcunion$ is in \CHvass[1].
	Let $|Q|$ be the number of states and $\norm(\delta)$ be the maximum counter effect of transitions in the $\Ahvass[1]$ accepting $a^nb^{\le n}c^*\cup a^nb^*c^{\le n}$. 
	Consider the sequence of words $w_n=a^nb^n`bc^n$ and the runs given on these words by the resolver. Let $(q_n,v_n)$ be the configuration reached after reading $a^nb^n$, for every $n$. Now, we consider two cases. Suppose there exists a bound $B$ such that $v_n<B$ for all $n$. Then there exists $n_1<n_2$ such that $(q_{n_1},v_{n_1})=(q_{n_2},v_{n_2})$. Since $bc^{n_2}\in L(q_{n_2},v_{n_2})$, we get an accepting run on $a^{n_1}b^{n_1}bc^{n_2}$ which is a contradiction.
	Therefore, the counter values $v_n$ must be unbounded.
	
	Now, consider the infinite sequence of words $w_{n_1}, w_{n_2}, \dots$ such that the state reached after $a^{n_i}b^{n_i}$ is the same, i.e, $q_{n_1}=q_{n_2}=\dots$ and the last $|Q|$ many transitions leading to $(q_{n_i},v_{n_i})$ are also the same. Note that since there are finitely many choices of the last state and $|Q|$ length sequence of transitions, such an infinite subsequence must exist.
	Let $(q^{j}_{n_i},v^{j}_{n_i})$ denote the configuration reached in the run on $a^{n_i}b^{n_i}$ given by resolver after the prefix $a^{n_i}b^{n_i-j}$, for $j\le |Q|$. 
	It is easy to see that $v_{n_1}<v_{n_2}<\dots$, as $L(q_{n_i},v_{n_i})\subsetneq L(q_{n_{i'}},v_{n_{i'}})$ for $i<i'$ witnessed by $bc^{n_{i'}}\in L(q_{n_{i'}},v_{n_{i'}})$ but not in $L(q_{n_{i}},v_{n_{i}})$.
	Since the last $|Q|$ transitions leading to $(q_{n_i},v_{n_i})$ are the same, we can also conclude that $v^j_{n_1}<v^j_{n_2}<\dots$, for all $j\le |Q|$. We write $q^j$ to denote $q^j_{n_i}$ since the state is the same for all choices of $i$.
	
	Note that $b^jc^*\subseteq L(q^j,v^j_{n_i})\not \supseteq b^{j+1}c^*$, for all $n_i$, $j\le |Q|$. The inclusion of $b^jc^*$ is immediate because the resolver must make language maximal choices. However, if $b^{j+1}c^*$ is included, then we get an accepting run on $a^{n_i}b^{n_i+1}c^{n_i+i}$, which is a contradiction.
	This shows that $b^jc^*\subseteq L(q^j,c)\not \supseteq b^{j+1}c^*$ for any $c>v^j_{n_1}$. This is because languages from configurations with the same state are monotone in the value of the counter.

	Note that there exists a $j<j'$ such that $q^j=q^{j'}$ as we look at runs whose last $|Q|$ transitions (and therefore $|Q|+1$ states) are the same. Now choose $n_i$ such that $v^{j}_{n_i}$ and $v^{j'}_{n_i}$ are both bigger than $min(v^{j}_{n_1},v^{j}_{n_1})$. Therefore, by the previous observation,  
	$b^jc^*\subseteq L(q^{j'},v^{j'}_{n_i})\not \supseteq b^{j+1}c^*$.  This means $b^{j'}c^*$ is not accepted from $(q^{j'},v^{j'}_{n_i})$ which contradicts that the run was chosen by a resolver. This concludes the proof that the language $\Lnotcunion$ is not accepted by any $\Ahvass[1]$.

	\item[Intersection.]
	 Consider the language $\Lnotcint=a^nb^{\le n}c^* \cap a^nb^*c^{\le n}$ and suppose it is accepted by $\Ahvass[1]$. Consider the runs on $a^nb^nc^n$ given by the resolver. If the configuration reached after $a^nb^n$ has counter value bounded, then by a similar reasoning to the union case, we can find $n_1<n_2$ such that the configuration reached by the resolver after reading $a^{n_1}b^{n_1}$ and $a^{n_2}b^{n_2}$ are the same and we get an accepting run on $a^{n_1}b^{n_1}c^{n_2}$ which is a contradiction. If the configuration is unbounded, we get a $n$ such that the configuration reached after reading $a^nb^n$ has counter value $>(|Q|+1)\norm(\delta)$. This allows to repeat a cycle in $b^n$ block as the maximum decreasing effect is at most $(|Q|+1)\norm(\delta)$. This gives an accepting run on $a^nb^m$, where  $m>n$, which is a contradiction.
	
	For the reachability semantics, the proof is even simpler as $\Lnotrint=a^nb^nc^n$ is not even context-free and therefore not definable even with zero tests on the counter.
	\qedhere
\end{description}
	
\end{proof}

\section{Additional Material for Decision Problems (Section~\ref{sec:decision})}
\label{app:decision}

\cmfinite*
\begin{proof}
	Suppose that one could decide above question. Then one can also decide the halting problem:
	if the set of reachable configurations is infinite then clearly $M$ does not halt.
	Otherwise, we can determine if $M$ halts by simulating it either until it halts, or if it re-visits one configuration without halting.
\end{proof}

\section{Index of Languages used in this paper}
\label{sec:index}
\begin{center}
\begin{tabularx}{\textwidth}{ |l|X|c|c| }     
 \hline
 
 Name    & Definition & Alphabet &Page \\
  \hline
 $\LnotDVASS$  
    & $ a^nb^{\le n} + a^*b^*c$
    & $\{a,b,c\}$
    & \pageref{lang:LnotDVASS}\\
 $\Lanbgen$
    & $a^nb^{\ge n}\#$
    & $\{a,b,\#\}$
    & \pageref{lang:Lanbgen}\\
 $\LnotHDVASS$  
    & $(a+b)^*a^nb^{\le n}$
    & $\{a,b\}$
    & \pageref{lang:LnotHDVASS}\\
 $\Lanblen$
    & $a^nb^{\le n}$
    & $\{a,b\}$
    & \pageref{lang:Lanblen}\\
 $\Lanbn$
	& $a^nb^n$
 	& $\{a,b\}$
 	& \pageref{lang:Lanbn}\\
 $\LmustVASSe$
 	& $\mathit{bin}(n)\#0^{\le n}\# $, where $\mathit{bin}(n)$ is $n$ in binary.
 	& $\{0,1,\#\}$
 	& \pageref{lang:LmustVASSe}\\
 $\Lanblenbarrier$
 	& $a^nb^{\le n}\#$
 	& $\{a,b,\#\}$
 	& \pageref{lang:Lanblenbarrier}\\ 	
 $\LnotFSVASS$  
    & $\bigcup_{k=0}^{\infty} a^{n_0}b^{n_0}\dots a^{n_{k-1}}b^{n_{k-1}}a^{n_{k}} b^{\le n_{k}}a\Sigma^*$.
    & $\{a,b\}$
    & \pageref{lang:LnotFSVASS}\\

    $\Lnotrunion$
    & $a^nb^n \cup a^nb^{2n}$
    & $\{a,b\}$
    & \pageref{lang:lnotrunion}\\ 
   	$\Lnotcunion$
    & $a^nb^{\le n}c^*\cup a^nb^*c^{\le n}$
    & $\{a,b,c\}$
    & \pageref{lang:lnotcunion}\\
        $\Lnotcint$
    & $a^nb^{\le n}c^n \cap a^nb^*c^{\le n}$
    & $\{a,b,c\}$
    & \pageref{lang:lnotcint}\\
        $\Lnotrint$
    & $a^nb^nc^n$
    & $\{a,b,c\}$
    & \pageref{lang:lnotrint}\\    
    $\Lanbnastar$
    & $a^nb^na^*= a^nb^nc^* \cap a^nb^*c^n$
    & $\{a,b\}$
    & \pageref{lang:lanbnastar}\\

 \hline
\end{tabularx}
\end{center}
 \end{document}